\theoremstyle{theorem}
\newcommand{\OC}[1]{$\mathcal{C}_{#1}$} %
\newcommand{\N}{\ensuremath{\mathds N}} %
\begin{document}

  \title{Monotone Arc Diagrams with few Biarcs\footnote{This work started at the
      workshop \emph{Graph and Network Visualization} 2017. We would like to
      thank Stefan Felsner and Stephen Kobourov for useful
      discussions.}} 

\date{Revision \svnInfoRevision\ --- \svnToday}
\authorrunning{Steven~Chaplick, Henry F\"orster, Michael Hoffmann and Michael Kaufmann} 

\author[1]{Steven Chaplick\thanks{Partially supported by DFG grant WO 758/11-1.}}
\author[2]{Henry F\"orster}
\author[3]{Michael Hoffmann\thanks{Supported by the Swiss National
    Science Foundation within the collaborative DACH project \emph{Arrangements
      and Drawings} as SNSF Project 200021E-171681.}}
\author[2]{Michael Kaufmann}
\affil[1]{Universit\"at W\"urzburg, Germany and Maastricht University, the Netherlands\\\texttt{s.chaplick@maastrichtuniversity.nl}} 
\affil[2]{Universit\"{a}t T\"{u}bingen, Germany\\\texttt{\{foersth,mk\}@informatik.uni-tuebingen.de}} 
\affil[3]{Department of Computer Science, ETH Z\"urich, Switzerland\\\texttt{hoffmann@inf.ethz.ch}} 

\captionsetup[subfigure]{justification=centering}


\maketitle

\begin{abstract}
  We show that every planar graph 
  can be represented by a monotone topological $2$-page book embedding where at
  most $15n/16$ (of potentially $3n-6$) edges cross the spine exactly once.
\end{abstract}

\section{Introduction}\label{sec:intro}

\emph{Arc diagrams} (\figurename~\ref{fig:0}) are drawings of graphs that represent
vertices as points on a horizontal line, called \emph{spine}, and edges as \emph{arcs}, consisting of a sequence of
halfcircles centered on the spine. A \emph{proper arc} consists of one single
halfcircle. In \emph{proper arc diagrams} all arcs are proper. In  
\emph{plane} arc diagrams no two edges cross. Note that plane proper arc diagrams are also known as \emph{$2$-page book embeddings} in the literature. Bernhard and Kainen~\cite{bk-btg-79}
characterized the graphs admitting plane proper arc diagrams:
subhamiltonian planar graphs, i.e., subgraphs of planar
graphs with a Hamiltonian cycle. In particular, non-Hamiltonian maximal planar graphs do not
admit plane proper arc diagrams. 

\begin{figure}[bhtp]
  \centering%
  \begin{subfigure}[b]{.3\textwidth}
  \centering
	\includegraphics{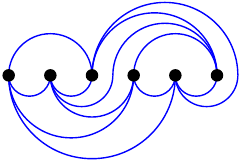} 
	\subcaption{}	
	\label{fig:1:a}  
  \end{subfigure}\hfill
  \begin{subfigure}[b]{.3\textwidth}
  \centering
 	\includegraphics{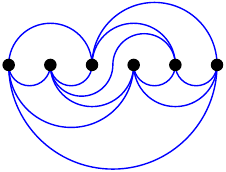}
  	\subcaption{}	
	\label{fig:1:b}  
  \end{subfigure}\hfill
  \begin{subfigure}[b]{.3\textwidth}
  \centering
  	\includegraphics{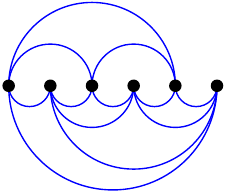}
  	\subcaption{}	
	\label{fig:1:c}
  \end{subfigure} 
  \caption{Arc diagram (a), monotone arc
    diagram (b), proper arc diagram (c) of the octahedron.\label{fig:0}}
\end{figure}

To represent all planar graphs, it suffices to allow each edge to cross the
spine at most once~\cite{kw-evpfb-02}. The resulting arcs  composed of two
halfcircles are called \emph{biarcs} (see \figurename~\ref{fig:1:a}).
Additionally, all edges can be drawn as \emph{monotone} curves w.r.t.  the
spine~\cite{ddlw-ccdpg-05}; such a drawing is called a \emph{monotone
  topological ($2$-page) book embedding}. A monotone biarc is either
\emph{down-up} or \emph{up-down}, depending on if the left halfcircle is drawn
above or below the spine, respectively. Note that a \emph{monotone topological} $2$-page book embedding is not necessarily a $2$-page book embedding even though the terminology suggests it.

In general, biarcs are needed, but \emph{some} edges can be drawn as proper
arcs. Cardinal et al{.}~\cite{chktw-adfdht-18} gave bounds on the required
number of biarcs showing that every planar graph on $n\ge 3$ vertices admits a
plane arc diagram with at most $\lfloor(n-3)/2\rfloor$ biarcs (not necessarily
monotone). They also described a family of planar graphs on $n_i=3i+8$ vertices
that cannot be drawn as a plane biarc diagram using less than $(n_i-8)/3$ biarcs
for $i\in\N$. However, they use arbitrary biarcs. 
When requiring only monotone biarcs,
Di Giacomo et al.~\cite{ddlw-ccdpg-05} gave an algorithm to
construct a monotone plane arc diagram that may create close to $2n$ biarcs for
an $n$-vertex planar graph. Cardinal et al{.}~\cite{chktw-adfdht-18} improved
this bound to at most $n-4$ biarcs.

\subparagraph*{Results.}
As a main result, we improve the upper bound on the number of monotone biarcs:

\begin{restatable}{theorem}{mainthm}\label{thm:1}
  Every $n$-vertex planar graph admits a plane arc diagram with at
  most $\left\lfloor \frac{15}{16}n- \frac{5}{2} \right\rfloor $ biarcs that are all down-up monotone. Such a diagram is computable in $O(n)$~time.
\end{restatable}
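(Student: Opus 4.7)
My plan is to build on the incremental approach of Cardinal et al.\ that yields the $n-4$ baseline and to squeeze additional savings out of local configurations in a canonical ordering. First I would reduce to the case that $G$ is a planar triangulation, since edges can only be added without hurting the bound (removing them at the end only decreases the number of biarcs), and fix a canonical ordering $v_1,v_2,\ldots,v_n$ of this triangulation. Placing the vertices in this order on the spine, at every step the new vertex $v_k$ has a contiguous path of $\ell_k$ neighbors on the current outer boundary; the baseline strategy draws the two extreme connecting edges as proper arcs (one above, one below the spine) and the remaining $\ell_k - 2$ edges as down-up monotone biarcs, and summing these contributions, with care for the three outer vertices, reproduces the $n-4$ bound.

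The crux is then to identify a finite catalogue of \emph{savings configurations}: patterns on $O(1)$ consecutive canonical-ordering steps in which either one baseline biarc can be upgraded to a proper arc, or a vertex can be re-inserted so as to skip one biarc altogether. Natural candidates are vertices of low degree on the current outer boundary, pairs of consecutive vertices whose neighborhoods overlap enough that one biarc suffices for two contributions, and small induced subgraphs (triangles or near-$K_4$ structures) that can be laid out entirely with proper arcs. For each catalogued pattern one has to exhibit an explicit drawing respecting the down-up monotone restriction and preserving the envelope invariants needed to continue the construction, so that savings from different patterns do not interfere.

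The final step is a global counting / discharging argument showing that at least $n/16$ vertices (up to an additive constant) lie in a savings configuration, using $\sum_v\deg(v)=6n-12$ in a triangulation and the fact that a canonical ordering cannot concentrate all ``bad'' local structure on few vertices. Subtracting this from the $n-4$ baseline yields $\lfloor 15n/16 - 5/2\rfloor$, and since the canonical ordering together with each local savings test runs in constant time per vertex, the total cost is $O(n)$. I expect the most delicate part to be precisely this discharging step: one has to choose the savings configurations carefully enough that every canonical step falls into a bounded set of cases, and balance the weights so that the guaranteed savings ratio is exactly $1/16$. The constant $15/16$ will be forced by the most adversarial configuration one is unable to exclude, and pinning down that configuration is where the main combinatorial difficulty lies.
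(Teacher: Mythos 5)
Your sketch shares the paper's skeleton --- reduce to a triangulation, place vertices by a canonical ordering, and amortize the cost of insertions --- but it misses the central device that actually makes the bound work. The paper does \emph{not} hunt for savings inside a single drawing. It builds \emph{two} arc diagrams in parallel: a \emph{forward} drawing using only down-up biarcs and a \emph{reverse} drawing (its horizontal mirror) using only up-down biarcs, and at the end outputs whichever has fewer biarcs, mirroring if necessary so the result is always down-up. The amortized charge is $\alpha = 2-\pi$ credits per vertex \emph{summed over both drawings}, i.e.\ $1-\pi/2<1$ per drawing on average; with $\pi=1/8$ this yields the $15/16$ factor.

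This two-orientation comparison is not a refinement but a necessity, and your one-drawing plan would stall on exactly the bad case of \figurename~\ref{fig:problems}: a chain of degree-two vertices each stacked onto a mountain. Inserted into a fixed forward drawing, each such vertex unavoidably creates a fresh down-up biarc and costs $1+\pi$, so no finite catalogue of ``savings configurations'' confined to that single orientation can recover $n/16$ credits --- yet the mirrored drawing realizes the whole chain with a single biarc. The paper's accounting exploits this asymmetry systematically: patterns that are expensive forward tend to be cheap in reverse, and the few that are expensive in \emph{both} orientations become the explicit ``open configurations'' of \figurename~\ref{fig:openConfigurationsOverview}, each carrying a debt that is paid or transferred when the configuration is later covered. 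A degree-sum discharging argument in a fixed orientation, as you propose, has no handle on this, because an adversarial canonical ordering can make the bad configurations dominate in one direction.
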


For general arc diagrams, $\lfloor (n-8)/3\rfloor$ biarcs may be needed~\cite{chktw-adfdht-18}, but it is
 conceivable that this number increases for monotone biarcs. We investigated the
 lower bound with a SAT based approach (based on~\cite{DBLP:conf/gd/Bekos0Z15}),
with the following partial result; details will appear in the full version.
\begin{restatable}{observation}{satResult}
 Every Kleetope on $n'=3n-4$ vertices
derived from triangulations of  $n\le 14$ vertices admits a plane arc diagram with
$\lfloor (n'-8)/3 \rfloor$ monotone biarcs.
 \end{restatable}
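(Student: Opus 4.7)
The plan is to verify the statement by computer-assisted SAT search, in the spirit of the encoding of Bekos et al.~\cite{DBLP:conf/gd/Bekos0Z15}. The procedure decomposes into three phases: enumerating the instances, building a CNF formula for each instance, and solving it.

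For the first phase, for each $3 \le n \le 14$, I enumerate all combinatorially distinct triangulations $T$ on $n$ vertices up to isomorphism, for example using the \texttt{plantri} generator, and for each such $T$ build the Kleetope $K(T)$ on $n' = 3n-4$ vertices by inserting a fresh degree-$3$ apex into every face (including the outer one) and joining it to the face's three boundary vertices.

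For the second phase, I encode the existence of a plane monotone arc diagram of $K(T)$ with at most $k = \lfloor (n'-8)/3 \rfloor = n-4$ biarcs as a CNF formula with three groups of variables: (i) order variables $x_{u,v}$, subject to standard antisymmetry and transitivity clauses, realizing the left-to-right order along the spine; (ii) for each edge a type selector among the four options ``proper above'', ``proper below'', ``down-up biarc'', ``up-down biarc'', plus auxiliary variables locating each biarc's spine-crossing within the vertex order; (iii) biarc indicator variables whose sum is bounded by $k$ via a standard sequential-counter cardinality encoding. Planarity is enforced by quantifying over every pair of edges and forbidding each combinatorial configuration of endpoints, spine-crossings, and types in which the two arcs would cross.

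The main obstacle is controlling the size of the search: for $n=14$ we have $n'=38$ and $3n'-6=108$ edges, hence $\Theta(n'^2)$ order variables and $\Theta(n'^4)$ non-crossing clauses, and already the number of $14$-vertex triangulations to process is sizeable. To keep runs tractable I would (a) break spine symmetries by fixing the leftmost vertex and a global orientation; (b) exploit automorphisms of $K(T)$ by further fixing orbit representatives among the apex vertices; and (c) use incremental SAT across instances of the same size to reuse learned clauses. The full version would list the concrete encoding, the complete generated instance set, and the solver certificates; together with the lower bound of \cite{chktw-adfdht-18}, this shows that these Kleetopes attain the worst-case monotone biarc count for $n \le 14$.
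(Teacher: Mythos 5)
Your high-level plan matches the paper's: both verify the observation by an exhaustive SAT-based search over all triangulations on up to $14$ vertices, with a CNF encoding of linear spine order, page assignment, spine crossings, non-crossing of arcs, and a cardinality bound on the number of biarcs. The concrete encodings differ only in routine ways (the paper follows Bekos et al., introducing a dummy bend vertex per edge, page variables $\phi_{k,i,j}$ for edges and half-edges, and an $O(n^3)$-clause ``$\beta_e^i$ means $e$ is the $i$-th biarc'' counter; you propose a four-way per-edge type selector with bend-location auxiliaries and a sequential counter), and either choice is defensible.

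The gap is in how you handle the Kleetope. You propose to hand the full Kleetope on $n'=3n-4$ vertices (so $38$ vertices and $108$ edges at $n=14$) to the solver and fight the resulting blow-up with symmetry breaking and incremental solving. The paper instead never encodes the apex vertices at all: it encodes only the base triangulation $T$ on $n$ vertices (plus its $O(n)$ bend dummies) and adds, for each face $f=(e_{k_1},e_{k_2},e_{k_3})$ and each page $l$, one clause saying that not all three bounding edges lie on page $l$ unless one of them is a biarc. That is precisely the condition under which the degree-$3$ apex $v_f$ has a gap on the spine to be inserted into. This collapses the instance from $O(n')$ to $O(n)$ vertices---a constant-factor drop that cubes in the $O((n+m)^3)$ transitivity and cardinality clause counts---and is what makes the $n=14$, $n'=38$ cases tractable. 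Without it, it is unclear your encoding finishes. A smaller omission: the paper's claim is verified \emph{even with a prescribed outer face} (they add clauses forcing a chosen triangle to be outermost), which matters because it is open whether the outer face may be chosen freely in monotone arc diagrams; leaving the solver free to pick the outer face, as you propose, verifies a strictly weaker statement.
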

 
Note that a \emph{Kleetope} is derived from a planar triangulation $T$ by inserting a new vertex $v_f$ into each face $f$ of $T$ and then connecting $v_f$ to the three vertices bounding $f$.
%

\subparagraph*{Related Work.}

Giordano et al{.}~\cite{GiordanoLMSW15} showed that every upward planar graph
admits an \emph{upward topological book embedding} where edges are either
proper arcs or biarcs. One of their directions for future work is to 
minimize the number of spine crossings. Note that these embeddings are monotone arc
diagrams with at most one spine
crossing per edge respecting the orientations of the edges. 
Everett et al.~\cite{DBLP:journals/dcg/EverettLLW10} used
monotone arc diagrams with only down-up biarcs to construct small universal point sets for $1$-bend
drawings of planar graphs. This result was
extended by L\"offler and T\'oth~\cite{LofflerT15} by restricting the set of
possible bend positions. They use  monotone arc diagrams with at
most $n-4$ biarcs to build universal points set of size $6n-10$ (vertices and
bend points) for $1$-bend drawings of planar graphs on $n$ vertices. Using Theorem~\ref{thm:1}, we can slightly decrease the number of points by approximately $n/16$.

\section{Overview of our Algorithm}\label{sec:overview}

To prove Theorem~\ref{thm:1} we describe an algorithm to incrementally construct
an arc diagram for a given planar graph $G$ on $n$ vertices. W.l.o.g. we assume that $G$ is a (combinatorial) \emph{triangulation},
i.e., a maximal planar graph.
Our algorithm is a (substantial) refinement of the algorithm of Cardinal et
al{.}, which  is based on the notion of a canonical ordering. A canonical
ordering is defined for an \emph{embedded} triangulation. Every triangulation
on $n\ge 4$ vertices is $3$-connected, so selecting one facial triangle as the \emph{outer face} embeds it into~the plane which determines a unique outer face (cycle) for every biconnected subgraph.
A \emph{cano\-nical ordering}~\cite{fpp-hdpgg-90} of an embedded triangulation
$G$  is a total order of vertices $v_1,\ldots,v_n$~s.t.
\begin{itemize}
\item[--] for each $i\in\{3,\ldots,n\}$, the induced subgraph $G_i=G[\{v_1,\ldots,
  v_i\}]$ is biconnected and internally triangulated (i.e., every inner face 
  is a triangle);
\item[--] for each $i\in\{3,\ldots,n\}$, $(v_1,v_2)$ is an edge of the outer face $C_i$ 
of $G_i$;
\item[--] for each $i\in\{3,\ldots,n-1\}$, $v_{i+1}$ lies in the interior of $C_i$  and the neighbors of $v_{i+1}$
  in $G_i$ form a sequence of consecutive vertices along the boundary of $C_i$. 
\end{itemize}
Every triangulation admits a canonical ordering~\cite{fpp-hdpgg-90} and one can be computed in $O(n)$ time~\cite{cp-ltadp-95}.  We say that a vertex $v_i$
\emph{covers} an edge $e$ (a vertex $v$, resp.) if and only if $e$ ($v$, resp.) is an edge
(vertex, resp.) on $C_{i-1}$ but not an edge (vertex, resp.) on $C_i$.

We iteratively process the vertices in a canonical order $v_1,\ldots,v_n$.
Every vertex $v_i$ arrives with $\alpha$ credits that we can either spend to
create biarcs (at a cost of one credit per biarc) or distribute on edges of
the outer face $C_i$ for later use. 
We prove our claimed bound by showing that each biarc drawn 
 can be paid for s.t. at least seven credits remain in total. 

There are two types of proper arcs: \emph{mountains} (above
the spine) and \emph{pockets} (below the spine). The following invariants hold after processing vertex $v_i$, for every
$i\in\{3,\ldots,n\}$.

\begin{enumerate}[label={(I\arabic*)}]\setlength{\itemindent}{\labelsep}
\item\label{i:biarcTypes} Every edge is either a proper arc or a down-up biarc.
\item\label{i:contour} Every edge of $C_i$ is a proper arc. 
  Vertex $v_1$ is the leftmost and $v_2$ is the rightmost vertex of $G_i$. Edge
  $(v_1,v_2)$ forms the lower envelope of $G_i$, i.e., no point of the drawing
  is vertically below it. The other edges of $C_i$ form the upper envelope of
  $G_i$.
\item\label{i:mountainMoney} Every mountain whose left endpoint is on $C_i$
  carries $1$ credit.
\item\label{i:pocketMoney} Every pocket on $C_i$ carries $\pi$ credits, for some
  constant $\pi \in (0,1)$.
\item\label{i:biarcMoney} Every biarc in $G_i$ carries (that is, is paid for
  with) $1$ credit.
\end{enumerate}

Usually, we insert $v_i$ between its leftmost neighbor $\ell_i$ and 
rightmost neighbor $r_i$ along $C_{i-1}$. The algorithm of Cardinal et
al{.}~\cite{chktw-adfdht-18} gives a first upper bound on the insertion
costs.

\begin{restatable}{lemma}{defaultApproachOne}
  If $v_i$ covers at least one pocket, then we can insert $v_i$ maintaining
  \ref{i:biarcTypes} to~\ref{i:biarcMoney} using 
  $\le 1$ credit. If $\deg_{G_{i}}(v_i) \geq 4$, then $1-\pi$ credits are
  enough.
\label{lem:defaultApproach1}
\end{restatable}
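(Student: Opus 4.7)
The plan is to carry out the amortized credit analysis of Cardinal et al.~\cite{chktw-adfdht-18} while exploiting the extra space that the covered pocket offers below the spine. Let $u_1,\ldots,u_k$ be the neighbors of $v_i$ on $C_{i-1}$ in their order along $C_{i-1}$, so the edges $v_i$ covers are $(u_l,u_{l+1})$ for $l=1,\ldots,k-1$. By hypothesis some $(u_j,u_{j+1})$ is a pocket, and I would insert $v_i$ on the spine strictly between the positions of $u_j$ and $u_{j+1}$; this makes the region inside the covered pocket available to host the adjacent new edges.

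With $v_i$ so placed, I would draw its new edges as follows. The edges $(v_i,u_j)$ and $(v_i,u_{j+1})$ become proper arcs (pockets) below the spine, nested inside the halfcircle of $(u_j,u_{j+1})$; this is planar because two halfcircles sharing an endpoint and with nested diameters cannot cross. The boundary edges $(v_i,u_1)$ and $(v_i,u_k)$ that lie on $C_i$ (and are distinct from the pockets above unless $j\in\{1,k-1\}$) are drawn as mountains; the same nesting argument shows they avoid the existing mountains of $C_{i-1}$. All other intermediate edges are down-up biarcs, routed via spine crossings chosen between the left endpoint and $v_i$. One then checks by inspection that \ref{i:biarcTypes} and \ref{i:contour} hold in the resulting drawing.

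Finally I would tally the credits. By \ref{i:pocketMoney} the covered pocket releases $\pi$ credit, and by \ref{i:mountainMoney} every covered mountain $(u_l,u_{l+1})$ with $l\ge 2$ releases $1$ credit; only $(u_1,u_2)$, when it is a mountain, may fail to contribute. On the cost side, the at most two new $C_i$-mountains cost $1$ each, the at most one new $C_i$-pocket (present only when $j\in\{1,k-1\}$) costs $\pi$, each new biarc costs $1$, and intermediate pockets are free by \ref{i:pocketMoney}. A short case distinction gives a net cost of $0$ when the covered pocket lies at an end with $k\ge 3$, of $1-\pi$ when it lies strictly in the middle (which forces $k\ge 4$), and of $\pi$ in the remaining case $k=2$. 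Therefore the net cost is at most $\max(\pi,1-\pi)\le 1$ always, and at most $1-\pi$ whenever $\deg_{G_i}(v_i)\ge 4$, as claimed. The principal obstacle is verifying that the planarity and nesting structure required by all five invariants \ref{i:biarcTypes}--\ref{i:biarcMoney} are preserved simultaneously in each subcase; once that is done, the arithmetic is routine.
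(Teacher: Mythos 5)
Your proposal departs from the paper's proof in two ways, and both affect the credit accounting. First, you place $v_i$ into an \emph{arbitrary} covered pocket; the paper places $v_i$ into the \emph{rightmost} covered pocket, which guarantees that every covered arc to the right of $v_i$ is a mountain and therefore releases a full credit under \ref{i:mountainMoney}, not merely $\pi$. Second, you draw the intermediate edges incident to $v_i$ as down-up biarcs, each costing one credit by \ref{i:biarcMoney}. The paper instead draws \emph{every} new edge as a proper arc, and the crucial point you miss is that a mountain $(u_l,v_i)$ with $1<l<j$ has left endpoint $u_l$, a \emph{covered} vertex no longer on $C_i$; by \ref{i:mountainMoney} such a mountain carries no credit and is free. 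Your biarcs add an extra cost of one per intermediate edge that your releases do not cover, and there is no geometric reason to prefer biarcs here: the corresponding mountains nest correctly above the spine.

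The error surfaces whenever $v_i$ covers more than one pocket. Take $k=5$ with the covered sequence $M,P,P,M$ (so $(u_1,u_2)$ and $(u_4,u_5)$ are mountains, $(u_2,u_3)$ and $(u_3,u_4)$ are pockets). With either $j\in\{2,3\}$ your drawing charges one credit for each of the two $C_i$-mountains $(u_1,v_i)$ and $(v_i,u_5)$ and one for the single intermediate biarc, total $3$; it releases $1$ from $(u_4,u_5)$ (whose left endpoint $u_4$ is covered) and $\pi$ from each covered pocket, total $1+2\pi$. The net is $2-2\pi$, which exceeds $1$ for $\pi=1/8$, so the claimed bound fails. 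The paper's drawing with $j=3$ and all non-pocket edges drawn as mountains charges only $2$ (for $(u_1,v_i)$ and $(v_i,u_5)$; the mountain $(u_2,v_i)$ is free since $u_2$ is covered) and nets $1-2\pi\le 1-\pi$. Your ``short case distinction'' implicitly assumes that all covered arcs other than the chosen pocket are mountains; dropping that assumption breaks the arithmetic, and the fix is exactly the paper's two choices: take the rightmost pocket and draw every new edge as a proper arc.
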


\begin{proof}[Proof (Sketch)]
We place $v_i$ in the rightmost covered pocket and  pay for at most $1$ mountain; see Figures~\ref{fig:naivePocket2} and~\ref{fig:naivePocket}. If $\deg_{G_{i}}(v_i) \geq 4$, at least $1$ covered pocket's credits is free.
\end{proof}

\begin{figure}[htbp]
  \centering%
    \begin{subfigure}[b]{.275\textwidth}
    \centering
    \includegraphics[page=52]{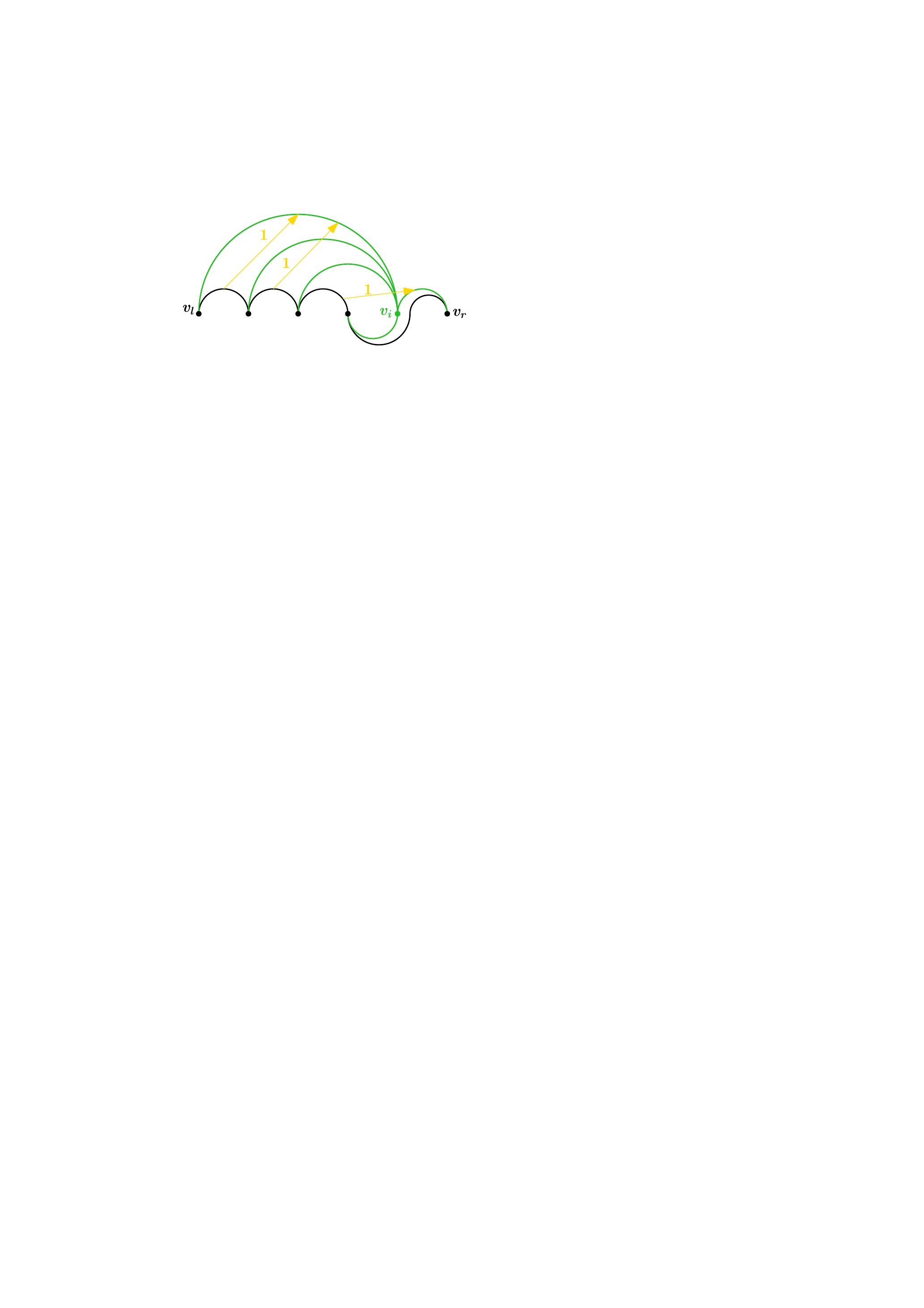}
	\subcaption{}     
    \label{fig:naivePocket2}
    \end{subfigure}\hfill
    \begin{subfigure}[b]{.35\textwidth}
    \centering
    \includegraphics[page=50]{arcDiagramsFigures}
    \subcaption{}  
    \label{fig:naivePocket}  
    \end{subfigure}\hfill
  \begin{subfigure}[b]{.275\textwidth}
   \centering
    \includegraphics[page=51]{arcDiagramsFigures}
    \subcaption{}  
    \label{fig:naiveMountains}
    \end{subfigure}
  \caption{Inserting a vertex $v_i$ using $1-\pi$, $1$, and $1+\pi$ credits,
    resp. (Lemma~\ref{lem:defaultApproach1}--\ref{lem:defaultApproach2}).}
\label{fig:naive}
\end{figure}

\begin{restatable}{lemma}{defaultApproachTwo}
  \label{lem:defaultApproach2}
  If $v_i$ covers mountains only, then we can insert $v_i$ maintaining
  \ref{i:biarcTypes} to~\ref{i:biarcMoney} using $\le 1+\pi$ credits. If
  $\deg_{G_{i}}(v_i) \geq 4$, then $5-\deg_{G_{i}}(v_i)$ credits suffice.%
\end{restatable}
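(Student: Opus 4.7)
\noindent Let $\ell_i = u_0, u_1, \ldots, u_{k-2}, u_{k-1} = r_i$ denote the neighbors of $v_i$ on $C_{i-1}$, where $k = \deg_{G_i}(v_i)$; by assumption, the covered edges $(u_j, u_{j+1})$ for $j = 0, \ldots, k-2$ are all mountains. The plan is to insert $v_i$ on the spine at a suitable position near $r_i$ and to draw its $k$ new edges as a combination of mountains, one pocket, and down-up biarcs, so that the credit bookkeeping dictated by \ref{i:mountainMoney}--\ref{i:biarcMoney} matches the claimed bounds.

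\noindent For the main bound of $1 + \pi$, I would draw $(v_i, r_i)$ as a mountain on $C_i$ (cost $1$ credit by \ref{i:mountainMoney}), $(v_i, \ell_i)$ as a pocket on $C_i$ (cost $\pi$ by \ref{i:pocketMoney}), and each intermediate edge $(v_i, u_j)$ for $j \in \{1, \ldots, k-2\}$ as a down-up biarc (cost $1$ each by \ref{i:biarcMoney}), giving a gross cost of $(k-1) + \pi$. Upon insertion, $u_1, \ldots, u_{k-2}$ leave $C_i$, so the $k-2$ covered mountains whose left endpoints are these vertices release their credits by \ref{i:mountainMoney}, yielding a net cost of $1 + \pi$.

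\noindent For the refined bound of $5 - k$ when $k \geq 4$, the key observation is that by \ref{i:mountainMoney} a new mountain whose left endpoint lies off $C_i$ carries no credit. I would refine the placement of $v_i$ so that $k-3$ of the intermediate edges are drawn as such credit-free mountains (with left endpoints among $u_1, \ldots, u_{k-2}$) and only one as a biarc, while drawing both contour edges as mountains (no pocket needed). The gross cost then becomes $1 + 1 + 1 = 3$, and subtracting the $k-2$ released credits yields a net cost of $3 - (k-2) = 5 - k$.

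\noindent The main obstacle will be ensuring planarity: the covered mountains shield the region above the spine between $\ell_i$ and $r_i$, so intermediate edges $(v_i, u_j)$ drawn as mountains must be nested properly inside this shield without crossing the covered mountains or the contour arcs of $C_{i-1}$ incident to $\ell_i$ and $r_i$. Achieving $k-3$ credit-free mountains for the improved bound requires a careful case analysis depending on the types of the contour edges of $C_{i-1}$ adjacent to $\ell_i$ and $r_i$.
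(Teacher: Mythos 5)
Your credit arithmetic comes out numerically right, but the underlying drawing you propose is not planar, and the gap is not a small case‐analysis detail — it is the central geometric step of the lemma.

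Since $v_i$ covers only mountains, the cyclic chain $u_0=\ell_i,u_1,\ldots,u_{k-1}=r_i$ is joined along $C_{i-1}$ by a sequence of halfcircles above the spine, so the spine interval $(\ell_i,r_i)$ is entirely "roofed over'' by the upper envelope. There is no point of the spine in that interval from which $v_i$ could see its neighbours without crossing those covered mountains, and placing $v_i$ outside $[\ell_i,r_i]$ contradicts \ref{i:contour}. Your placement "near $r_i$'' therefore has no legal realization: if $v_i$ lies between $u_{k-2}$ and $r_i$ it sits \emph{under} the mountain $(u_{k-2},r_i)$, i.e., inside $G_{i-1}$, and the long pocket $(\ell_i,v_i)$ (as well as the down-up biarcs $(u_j,v_i)$ you add) would have to pierce the covered arcs and whatever lies below the spine in between. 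You flag planarity as "the main obstacle'' at the end, but your construction does not resolve it. The paper's proof resolves it by \emph{pushing down} the leftmost covered mountain $(m_\ell,m_r)$: that single edge (and every mountain of $G_{i-1}$ sharing the left endpoint $m_\ell$) is redrawn as a down-up biarc whose spine crossing opens a gap in the upper envelope, and $v_i$ is placed exactly there. Only then is $(\ell_i,v_i)$ a short, safely nested pocket and $(v_i,u_1),\ldots,(v_i,r_i)$ are proper mountains. Crucially, the biarcs created are \emph{pre-existing} mountains whose \ref{i:mountainMoney} credit pays for their new \ref{i:biarcMoney} obligation; no new edge incident to $v_i$ becomes a biarc. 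Your version instead mints $k-2$ fresh biarcs for the intermediate edges — the totals happen to coincide at $1+\pi$, but that is an artifact of the accounting, not a witness that the drawing exists.

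The refined bound has the same flaw: you want $k-3$ intermediate edges to be credit-free mountains with left endpoints $u_1,\ldots,u_{k-2}$, which again forces $v_i$ to sit to the right of $u_{k-2}$ and hence under the mountain $(u_{k-2},r_i)$. The paper instead pushes down the \emph{rightmost} covered mountain so that $v_i$ lands in the resulting gap with exactly two new mountains whose left endpoint lies on $C_i$ (namely $(\ell_i,v_i)$ and $(v_i,r_i)$) and $k-3$ credit-free mountains in between, yielding the claimed $5-\deg_{G_i}(v_i)$. So: the bookkeeping in your writeup is essentially sound, but it is attached to a construction that does not exist; the missing idea is the push-down operation that converts a covered mountain into a down-up biarc to make room on the spine.
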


 \begin{proof}[Proof (Sketch)]
 If $\deg_{G_{i}}(v_i) < 4$, we push down the leftmost mountain and place $v_i$ on the created biarc paying for $1$ mountain and $1$ pocket each; see Figure~\ref{fig:naiveMountains}. If
  $\deg_{G_{i}}(v_i) \geq 4$, we push down the rightmost mountain saving the credit of a covered mountain; see Figure~\ref{fig:degree4+Mountains}.  
 \end{proof}

\begin{figure}[hb]
  \centering%
  \includegraphics[page=56]{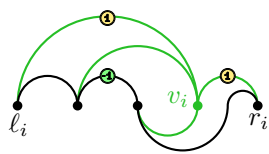}
  \caption{An alternative drawing to insert a degree four vertex.}
  \label{fig:degree4+Mountains}  
\end{figure}

Full proofs of Lemmas~\ref{lem:defaultApproach1} and~\ref{lem:defaultApproach2} will appear in the full version.
%
We only steal credits from arcs
on $C_{i-1}$ in both  proofs. If left endpoints of proper arcs not on $C_{i-1}$ are covered, there is slack. 

In the following, we prove that we can choose $\pi=1/8$, so that to achieve the bound of Theorem~\ref{thm:1}, we insert a vertex
at an average cost of $1-\pi/2$. Lemmas~\ref{lem:defaultApproach1}~and~\ref{lem:defaultApproach2}
guarantee this bound only in certain cases, e.g., a sequence of three degree
two (in $G_i$) vertices stacked onto mountains costs $1+\pi$ 
per vertex and produces three biarcs, see \figurename~\ref{fig:degree2stackfw}. A symmetric scheme
with up-down biarcs realizes the same graph with one biarc; see
\figurename~\ref{fig:degree2stackbw}.

\begin{figure}[htbp]
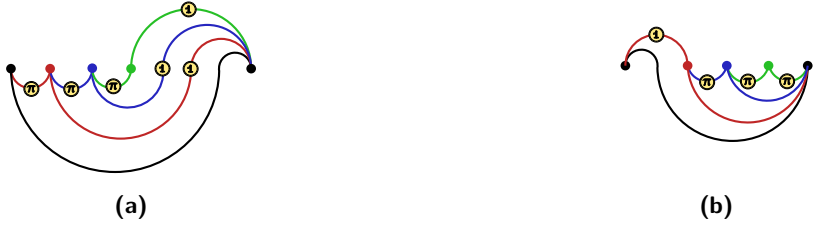

    \centering
  \begin{subfigure}[b]{.45\textwidth}
    \centering
    \includegraphics[page=53]{arcDiagramsFigures}
    \subcaption{}  
    \label{fig:degree2stackfw}
    \end{subfigure}\hfill
  \begin{subfigure}[b]{.45\textwidth}
    \centering
    \includegraphics[page=54]{arcDiagramsFigures}
    \subcaption{}  
    \label{fig:degree2stackbw}
    \end{subfigure}\hfill
  \caption{A
    sequence of degree two vertices in forward (a) and reverse drawing (b).}
\label{fig:problems}
\end{figure}

To exploit this behavior, we consider the instance in both a \emph{forward drawing}, using only
proper arcs and down-up biarcs, and a \emph{reverse drawing} that uses only
proper arcs and up-down biarcs (and so \ref{i:biarcTypes} and
\ref{i:mountainMoney} appear in a symmetric formulation). Out of the two
resulting arc diagrams, we choose one with a fewest number of biarcs.
To prove Theorem~\ref{thm:1}, we need to insert a vertex at an average
cost of $\alpha=2-\pi$ credits into both diagrams.

The outer face, a
sequence of pockets and mountains, can evolve differently in both
drawings because edges covered by a
vertex may not be drawn the same way in both drawings. Further, it
does not suffice to consider a single vertex in isolation. For instance,
consider a degree three vertex inserted above two mountains in both
the forward and reverse drawings; see \figurename~\ref{fig:openConfigurationB}. In
each drawing, this costs $1+\pi$ credits, or $2(1+\pi)$ in total. W.r.t. our target value $\alpha=2-\pi$, these costs incur a
\emph{debt} of $3\pi$ credits. Indeed, there are several such \emph{open
  configurations}, listed in \figurename~\ref{fig:openConfigurationsOverview},
for which our basic analysis does not suffice.


Each open configuration $\mathcal{C}$ consists of up to two adjacent vertices on
the outer face whose insertion incurred a debt and their incident edges. 
It specifies the drawing of these edges, as pocket, mountain, or biarc
in forward and in reverse drawing, as well as the drawing 
of the edges covered by the vertices of the open configuration.  When a vertex
$v_i$ covers (part of) an open configuration, we may alter the placement of the
vertices and/or draw the edges of the open configuration differently.
The associated debt $d(\mathcal{C})$ is the amount of credits paid in addition to
$\alpha$ credits per vertex.
As soon as any arc of an open configuration is covered, the debt must be paid or
transferred to a new open configuration. We enhance our collection of
invariants as follows.

\begin{enumerate}[label={(I\arabic*)},start=6]\setlength{\itemindent}{\labelsep}
\item\label{i:openConfigurationDebts} A sequence of consecutive arcs on $C_i$ may be associated with a debt. Each arc is part of at most one open configuration; refer to
  \figurename~\ref{fig:openConfigurationsOverview} for a full list of such configurations.
\end{enumerate}

To prove Theorem~\ref{thm:1} we show that the credit total carried by arcs in
both drawings minus the total debt of all open configurations does not exceed
$\alpha i-5$ after inserting $v_i$.

\section{Default insertion of a vertex $v_i$}
\label{sec:default}

If $v_i$ does not cover any arc of an open
configuration, we use procedures from
Lemmas~\ref{lem:defaultApproach1} and~\ref{lem:defaultApproach2}. If $\deg_{G_i}(v_i)\ge 4$ and $v_i$ covers any pocket in either drawing, by Lemmas~\ref{lem:defaultApproach1} and~\ref{lem:defaultApproach2} the insertion
costs are at most $2-\pi=\alpha$. If $\deg_{G_i}(v_i)\ge 5$ and $v_i$ only covers
mountains in both drawings, the
 costs are $0$ (Lemma~\ref{lem:defaultApproach2}). If $\deg_{G_i}(v_i)=4$ and $v_i$ covers mountains~only~in both drawings, we obtain the open configuration in
\figurename~\ref{fig:openConfigurationA} with 
cost  $2+2\pi$ and  debt $3\pi$. 

If $\deg_{G_i}(v_i)=2$ and $v_i$ covers a pocket in one drawing,
insertion in this drawing costs $\pi$ resulting in total cost $\le 1+2\pi$ or at most $\alpha$ 
if $\pi\le 1/3$. If $\deg_{G_i}(v_i)=2$ and $v_i$ covers only mountains, we have the open configuration in
\figurename~\ref{fig:openConfigurationC} with 
cost  $2+2\pi$ and debt $3\pi$.

\begin{figure}[tb]
  \centering
  \begin{subfigure}[c]{.35\textwidth}
    \centering
    \includegraphics[page=1]{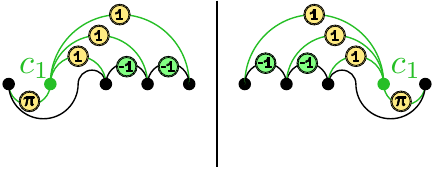}
    \subcaption{$d(\mathcal{C}_a)=3\pi$}  
    \label{fig:openConfigurationA}  
    \end{subfigure}\hfill
    \begin{subfigure}[c]{.275\textwidth}
    \centering
    \includegraphics[page=2]{openConfigurations}
    \subcaption{$d(\mathcal{C}_{b})=3\pi$} 
    \label{fig:openConfigurationB}   
    \end{subfigure}\hfill
    \begin{subfigure}[c]{.275\textwidth}
    \centering
    \includegraphics[page=8]{openConfigurations}
    \subcaption{$d(\mathcal{C}_c)=3\pi$} 
    \label{fig:openConfigurationC}   
    \end{subfigure}
    \vspace{.5\baselineskip}
    
    \begin{subfigure}[b]{.3\textwidth}
    \centering
    \includegraphics[page=4]{openConfigurations}
    \subcaption{$d(\mathcal{C}_d)=\pi$}  
    \label{fig:openConfigurationD}  
    \end{subfigure}\hfill
    \begin{subfigure}[b]{.3\textwidth}
    \centering
    \includegraphics[page=7]{openConfigurations}
    \subcaption{$d(\mathcal{C}_{e})=\pi$} 
    \label{fig:openConfigurationE} 
    \end{subfigure}\hfill
    \begin{subfigure}[b]{.3\textwidth}
    \centering
    \includegraphics[page=6]{openConfigurations}
    \subcaption{$d(\mathcal{C}_f)=\pi$}   
    \label{fig:openConfigurationF} 
    \end{subfigure}
    \vspace{.5\baselineskip}
    
    \begin{subfigure}[b]{.275\textwidth}
    \centering
    \includegraphics[page=3]{openConfigurations}
    \subcaption{$d(\mathcal{C}_g)=2\pi$} 
    \label{fig:openConfigurationG}   
    \end{subfigure}\hfill
    \begin{subfigure}[b]{.275\textwidth}
    \centering
    \includegraphics[page=5]{openConfigurations}
    \subcaption{$d(\mathcal{C}_{h})=2\pi$}  
    \label{fig:openConfigurationH}  
    \end{subfigure}\hfill
    \begin{subfigure}[b]{.35\textwidth}
    \centering    
    \includegraphics[page=9]{openConfigurations}
    \subcaption{$d(\mathcal{C}_\iota)=4\pi$}  
    \label{fig:openConfigurationI}  
    \end{subfigure}
    \vspace{.5\baselineskip}
    
    \begin{subfigure}[b]{.4\textwidth}
	\centering    
    \includegraphics[page=26]{openConfigurations}
    \subcaption{$d(\mathcal{C}_{j})=5\pi$} 
    \label{fig:openConfigurationJ}   
    \end{subfigure}\hfill
    \begin{subfigure}[b]{.4\textwidth}
    \centering
    \includegraphics[page=27]{openConfigurations}
    \subcaption{$d(\mathcal{C}_k)=5\pi$}      
    \label{fig:openConfigurationK}
    \end{subfigure}

  \caption{The set of open configurations. 
    Each subfigure shows the
    forward drawing (left) and the reverse drawing (right) and is captioned by the debt incurred. 
  }
\label{fig:openConfigurationsOverview}
\end{figure}

 It remains to consider  $\deg_{G_i}(v_i)=3$. There are four 
pocket-mountain configurations for two arcs of $G_{i-1}$ covered by $v_i$:
$MM$, $MP$, $PM$, and $PP$ (using $M$ for mountain and $P$ for pocket). Pattern $PP$ costs $1-\pi$, 
pattern $MM$ costs $1+\pi$. Each drawing has
its favorite mixed pattern ($PM$ for forward and $MP$ for reverse) 
with 
cost $0$; the other pattern costs $1$.

There is only one \emph{forward}$|$\emph{reverse} combination, $MM|MM$,
with cost $2+2\pi$ and   debt $3\pi$, leading to the open configuration
in \figurename~\ref{fig:openConfigurationB}. Two combinations,
$MM|PM$ and $MP|MM$, have  cost $2+\pi$ and   debt $2\pi$ resulting in  
open configurations in \figurename~\ref{fig:openConfigurationG} and
\ref{fig:openConfigurationH}, resp. Also, the combinations 
$MM|PP$, $PP|MM$, and $MP|PM$  with costs $2$ and a debt $\pi$  lead to
open configurations in \figurename~\ref{fig:openConfigurationD},
\ref{fig:openConfigurationE}, and \ref{fig:openConfigurationF}, resp.
All other combinations cost at most $\alpha$.

\section{When and how to pay your debts}
\label{sec:openConfig}

In this section, we describe the insertion of  $v_i$ if it covers an 
arc of an open configuration. Note that (1)~every open configuration contains
at least one mountain and at least one pocket in both drawings; (2)~the
largest debt incurred by one open configuration is $5\pi$. 
Open configurations $\mathcal{C}_\iota, \mathcal{C}_j, \mathcal{C}_k$ (with highest debts) are introduced in the discussion below. 

\subparagraph*{Case~1: $\deg_{G_i}(v_i)=2$.} If $v_i$ covers a pocket of an open
configuration $\mathcal{C}$ in either drawing, the insertion costs of
$\pi+(1+\pi)$ cover $d(\mathcal{C})$, as long as $1+2\pi+5\pi\le\alpha$,
that is, $\pi\le 1/8$.
 Assume $v_i$ covers a mountain of open
configuration $\mathcal{C}$ in both drawings; i.e., 
$\mathcal{C}\in\{\mathcal{C}_g,\mathcal{C}_h,\mathcal{C}_\iota\}$.
If $\mathcal{C}\in\{\mathcal{C}_g,\mathcal{C}_h\}$, we obtain the open
configurations in \figurename~\ref{fig:openConfigurationJ} and
\ref{fig:openConfigurationK}, resp., with cost $4+3\pi$ (for both vertices) and
debt $5\pi$. Otherwise $\mathcal{C}=\mathcal{C}_\iota$, and we use the drawings
shown in \figurename~\ref{fig:openConfigurationIplusdeg2:1} (where $v_i$ is
inserted on the left mountain; the other case is symmetric).  The costs are $2+3\pi$
(forward) and $3+2\pi$ (reverse), totaling $5+5\pi\le 3\alpha$, for
$\pi\le 1/8$.


\begin{figure}[tbp]
  \centering   
    \includegraphics[page=28]{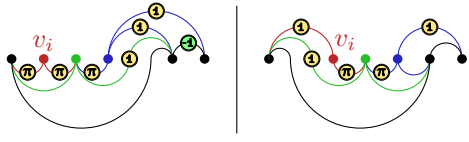}
  \caption{Alternative drawing to handle an open configuration $\mathcal{C}_\iota$ for $\deg_{G_i}(v_i)=2$.}
  \label{fig:openConfigurationIplusdeg2:1}  
\end{figure}

\subparagraph*{Case~2: $\deg_{G_i}(v_i)\ge 5$ and Case~3: $\deg_{G_i}(v_i) \in \{3,4\}$.} 
In the full version, we will discuss both cases in detail while we only mention the main ideas here. Each open configuration includes a mountain that can pay the debt if the configuration is entirely covered. We only focus on the left- and rightmost open configurations \OC{\ell} and \OC{r}. In both cases, we mainly carefully move vertices $v_i$, $c_1$ and $c_2$ of \OC{r} to avoid covered mountains from becoming biarcs, i.e., saving their credits.

\section{Summary \& Conclusions}

\begin{proof}[Proof of Theorem~\ref{thm:1}]
As previously shown, if $\pi \leq 1/8$, we maintain all invariants with $\alpha$ credits per vertex. $G_3$ is a triangle with two pockets on $C_3$ in both orientations, i.e. $G_3$ costs $4\pi$. As $v_1$,$v_2$ and $v_3$ contribute $3\alpha$ credits, there are $6-7\pi > 5$ unused credits after drawing $G_3$. If there remains an open configuration in $G_n$, there is a mountain with a credit paying its debt. Hence, the $5$ unused credits of $G_3$'s drawing remain. As a canonical ordering is computable in $O(n)$ time and we backtrack $O(1)$ steps if needed, the runtime follows.
\end{proof}

We proved the first upper bound of the form $c \cdot n$, with $c < 1$, 
for the total number of monotone biarcs in arc diagrams of
$n$-vertex planar graphs. In our analysis, only three subcases require $\pi \leq
1/8$, i.e., a refinement may provide a better upper bound.  Also, it remains
open if there is a planar graph that requires more biarcs in a monotone arc
diagram than in a general arc diagram. 
Finally, narrowing the gap between lower $\lfloor \frac{n-8}{3} \rfloor$ and upper $\lfloor \frac{15}{16}n - \frac{5}{2}\rfloor$ bounds would be interesting, particularly from the lower bound side.


\bibliographystyle{plainurl}
\bibliography{abbrv,arcs}

\begin{thebibliography}{10}

\bibitem{DBLP:conf/gd/Bekos0Z15}
Michael~A. Bekos, Michael Kaufmann, and Christian Zielke.
\newblock The book embedding problem from a {SAT}-solving perspective.
\newblock In {\em Graph Drawing}, volume 9411 of {\em Lecture Notes in Computer
  Science}, pages 125--138. Springer, 2015.

\bibitem{bk-btg-79}
Frank Bernhart and Paul~C. Kainen.
\newblock The book thickness of a graph.
\newblock {\em J. Combin. Theory Ser. B}, 27:320--331, 1979.
\newblock URL: \url{http://dx.doi.org/10.1016/0095-8956(79)90021-2}.

\bibitem{chktw-adfdht-18}
Jean Cardinal, Michael Hoffmann, Vincent Kusters, Csaba~D. T\'{o}th, and Manuel
  Wettstein.
\newblock Arc diagrams, flip distances, and {H}amiltonian triangulations.
\newblock {\em Comput. Geom. Theory Appl.}, 68:206--225, 2018.
\newblock URL: \url{https://doi.org/10.1016/j.comgeo.2017.06.001}.

\bibitem{cp-ltadp-95}
Marek Chrobak and Thomas~H. Payne.
\newblock A linear-time algorithm for drawing a planar graph on a grid.
\newblock {\em Inform. Process. Lett.}, 54:241--246, 1995.
\newblock URL: \url{http://dx.doi.org/10.1016/0020-0190(95)00020-D}.

\bibitem{fpp-hdpgg-90}
Hubert de~Fraysseix, J{\'a}nos Pach, and Richard Pollack.
\newblock How to draw a planar graph on a grid.
\newblock {\em Combinatorica}, 10(1):41--51, 1990.
\newblock URL: \url{http://dx.doi.org/10.1007/BF02122694}.

\bibitem{ddlw-ccdpg-05}
Emilio {Di Giacomo}, Walter Didimo, Giuseppe Liotta, and Stephen~K. Wismath.
\newblock Curve-constrained drawings of planar graphs.
\newblock {\em Comput. Geom. Theory Appl.}, 30(1):1--23, 2005.
\newblock URL: \url{http://dx.doi.org/10.1016/j.comgeo.2004.04.002}.

\bibitem{DBLP:journals/dcg/EverettLLW10}
Hazel Everett, Sylvain Lazard, Giuseppe Liotta, and Stephen~K. Wismath.
\newblock Universal sets of \emph{n} points for one-bend drawings of planar
  graphs with \emph{n} vertices.
\newblock {\em Discrete {\&} Computational Geometry}, 43(2):272--288, 2010.
\newblock URL: \url{https://doi.org/10.1007/s00454-009-9149-3}, \href
  {http://dx.doi.org/10.1007/s00454-009-9149-3}
  {\path{doi:10.1007/s00454-009-9149-3}}.

\bibitem{GiordanoLMSW15}
Francesco Giordano, Giuseppe Liotta, Tamara Mchedlidze, Antonios Symvonis, and
  Sue Whitesides.
\newblock Computing upward topological book embeddings of upward planar
  digraphs.
\newblock {\em J. Discrete Algorithms}, 30:45--69, 2015.
\newblock URL: \url{https://doi.org/10.1016/j.jda.2014.11.006}, \href
  {http://dx.doi.org/10.1016/j.jda.2014.11.006}
  {\path{doi:10.1016/j.jda.2014.11.006}}.

\bibitem{kw-evpfb-02}
Michael Kaufmann and Roland Wiese.
\newblock Embedding vertices at points: Few bends suffice for planar graphs.
\newblock {\em J. Graph Algorithms Appl.}, 6(1):115--129, 2002.
\newblock URL: \url{http://dx.doi.org/10.7155/jgaa.00046}.

\bibitem{LofflerT15}
Maarten L{\"{o}}ffler and Csaba~D. T{\'{o}}th.
\newblock Linear-size universal point sets for one-bend drawings.
\newblock In {\em Graph Drawing}, volume 9411 of {\em Lecture Notes in Computer
  Science}, pages 423--429. Springer, 2015.

\end{thebibliography}

\newpage
\appendix

\section*{Appendix}

\section{Proofs of Lemmas~\ref{lem:defaultApproach1} and~\ref{lem:defaultApproach2}}
\label{app:proofsOfLemmas}
\begin{proof}[Proof of Lemma~\ref{lem:defaultApproach1}]
  We place $v_i$ into the rightmost pocket $(p_\ell,p_r)$ it covers and draw all
  edges incident to $v_i$ as proper arcs.
  The path $(p_\ell,v_i,p_r)$ is drawn using pockets, the other new edges are
  drawn as mountains; see \figurename~\ref{fig:naive}. All edges covered to the
  right of $p_r$ (if any) are mountains. To satisfy \ref{i:mountainMoney}, we
  put $1$ credit on every arc between $v_i$ and vertices properly to the right
  of $p_r$. Every such arc $a$ covers one mountain $m_a$ along with the left
  endpoint of $m_a$. By \ref{i:mountainMoney}, $m_a$ has $1$ credit 
  which we move to $a$.
  For the edges between $v_i$ and a vertex properly to the left of $p_\ell$, only
  the leftmost one $(\ell_i,v_i)$ has its left endpoint on
  $C_i$. Therefore we need $\le 1$ additional credit to pay for all mountains
  incident to $v_i$. It remains to consider the pockets. If at most one pocket
  along $(p_\ell,v_i,p_r)$ appears on $C_i$, then we can take the $\pi$ credits
  needed from the now covered pocket $(p_\ell,p_r)$. Otherwise,
  $\deg_{G_i}(v_i)=2$ and we spend only $\pi$ credits overall as there is
  no mountain incident to $v_i$. Thus we pay $\le 1$ credit overall; see~\figurename~\ref{fig:naivePocket2}.

Suppose that $\deg_{G_i}(v_i) \geq 4$. 
If $\ell_i=p_\ell$, then we save a full
credit that we accounted for drawing $(\ell_i,v_i)$ as a mountain. If $v_i$ has at least $3$ neighbors to its left, then $v_i$ covers the left endpoint of an
edge $e$ of $C_{i-1}$ to its left. Hence we save at least $\pi$ credits on
$e$. Otherwise, $v_i$ has exactly $2$ neighbors on its left and at least $2$
 neighbors on its right; see~\figurename~\ref{fig:naivePocket}. In particular, no pocket is incident to
$v_i$ on $C_i$ and we save $\pi$ credits on $(p_\ell,p_r)$. In every case we
save $\ge\pi$ credits and therefore spend $\le 1-\pi$ credits to insert $v_i$. 
\end{proof} 
\begin{proof}[Proof of Lemma~\ref{lem:defaultApproach2}]
  We push down the leftmost mountain $(m_\ell,m_r)$ and place $v_i$ above it;
  see \figurename~\ref{fig:naiveMountains}. This operations transforms each
  mountain of $G_{i-1}$ with left endpoint $m_\ell$ into a
  down-up biarc. Costs for such biarcs are covered by their corresponding
  mountain credits due to \ref{i:mountainMoney}. 

  The insertion of $v_i$ creates a new pocket $(m_\ell,v_i)$ and a mountain for every
  neighbor in $(m_r,\ldots,r_i)$. We put $\pi$ credits on the pocket to satisfy
  \ref{i:pocketMoney}. On each mountain created, we put $1$ credit to satisfy
  \ref{i:mountainMoney}. Since $v_i$ covers only mountains, every mountain
  incident to $v_i$, except for $(v_i,m_r)$, covers the left endpoint of a
  mountain from $G_{i-1}$. Hence from each of these covered mountains we can
  steal $1$ credit; see \figurename~\ref{fig:naiveMountains}. Overall, $1 + \pi$
  credits suffice.

  If $\deg_{G_i}(v_i) \geq 4$, we can push down the rightmost covered mountain
  instead. This way we create exactly two new mountains whose left endpoint is
  on $C_i$. We also cover the left endpoint of $\deg_{G_i}(v_i)-3$ mountains from
  $C_{i-1}$. Hence, we can steal the credits from these mountains. The resulting
  total cost is $2-(\deg_{G_i}(v_i)-3)=5-\deg_{G_i}(v_i)$; see
  \figurename~\ref{fig:degree4+Mountains}.
\end{proof}

\section{Omitted Cases of Section~\ref{sec:openConfig}}
\label{ap:deg-3-4}

\subparagraph*{Case~2: $\deg_{G_i}(v_i)\ge 5$.} Here, $v_i$ may cover many open
configurations completely or partially (i.e. not
all arcs of the configuration are covered). Intuitively, fully covered open
configurations pay for themselves, only at most two partially covered
ones need some work.  Denote by $\mathcal{C}_\ell$ the open configuration
incident to the leftmost arc covered by $v_i$,  incident to vertex
$\ell_i$. If this arc is not part of an open configuration, we write
$\mathcal{C}_\ell=\emptyset$ or that $\mathcal{C}_\ell$ does not
exist. Analogously, $\mathcal{C}_r$ denotes the open configuration incident to the
rightmost covered arc, incident to vertex $r_i$. $\mathcal{C}_\ell$ and $\mathcal{C}_r$---if
existent---can be fully or partially covered by $v_i$.

Reonsider how to insert $v_i$
(cf.~Lemmas~\ref{lem:defaultApproach1}~and~\ref{lem:defaultApproach2}): We
show how to insert $v_i$ in the forward drawing at a cost
of $\le 1-\pi$, while also paying $d(\mathcal{C}_r)$ and debts of
all fully covered open configurations except $\mathcal{C}_\ell$. By symmetry, $v_i$ can be  inserted into the reverse drawing at a
cost of $\le 1-\pi$, including payment for $d(\mathcal{C}_\ell)$, achieving a total cost of at most $2-2\pi < \alpha$.

\subparagraph*{Case 2a: $v_i$ covers mountains only.} By (1) $v_i$ does not fully
cover any open configuration. By Lemma~\ref{lem:defaultApproach2} inserting
$v_i$ costs $\le 0$ credits, which together with $d(\mathcal{C}_r)\le 5\pi$
yields an insertion cost of $\le 1-\pi$, as long as $\pi\le 1/6$.

\subparagraph*{Case 2b: $v_i$ covers a pocket $p$ incident to $r_i$.} We insert $v_i$ into $p$  paying $1$ for edge $(\ell_i,v_i)$ and gaining $1$ for each
covered mountain not incident to $\ell_i$ and $\pi$ for each covered pocket. Worst case, all covered arcs to the left are pockets---except for the one
incident to $\ell_i$, which is a mountain, i.e., insertion costs are
$\le 1-(\deg_{G_i}(v_i)-3)\pi\le 1-2\pi$~(*). For each fully covered open
configuration $\mathcal{C}$, $v_i$ covers a mountain $M(\mathcal{C})$. If
$M(\mathcal{C})$ is incident to $\ell_i$, $\mathcal{C}=\mathcal{C}_\ell$
and we account for it in the reverse drawing. Otherwise, we gain at least $1$ credit for covering
$M(\mathcal{C})$, i.e., we gain $1-\pi$ credits compared to (*). By (2) this
settles $d(\mathcal{C})$ for $\pi\le 1/6$.

The case where $\mathcal{C}_r$ is only partially covered remains. By (*), w.l.o.g. assume  $d(\mathcal{C}_r)>\pi$. There are three subcases, depending on
the edge $e$ on $C_{i-1}$ to the left of $p$ 
covered by $v_i$.

\emph{Case 2b.1}: $e$ is a mountain that is not part of an open configuration.
Then again we gain $1-\pi$ credits by covering it, which is enough to pay
$d(\mathcal{C}_r)$ if $\pi\le 1/6$.

\emph{Case 2b.2}: $e$ is a mountain that is part of open configuration
$\mathcal{C}_M\ne\mathcal{C}_r$. As $v_i$ covers the left endpoint $q$ of $e$,
we can steal $1$ credit from \emph{each} mountain incident to $q$. Note
that for all open configurations but $\mathcal{C}_c$, $\mathcal{C}_e$,
$\mathcal{C}_\iota$, $\mathcal{C}_j$ there are at least two mountain edges
incident to $q$. In these cases we save an extra credit paying for
$d(\mathcal{C}_r)$. Only cases
$\mathcal{C}_M\in\{\mathcal{C}_c,\mathcal{C}_e,\mathcal{C}_\iota,\mathcal{C}_j\}$~remain.

If $\mathcal{C}_M=\mathcal{C}_c$, $v_i$ also covers the left endpoint of
$\mathcal{C}_M$. We move vertex $c_1\in\mathcal{C}_M$ into $p$ so that the
black edge underneath can be drawn as a mountain instead of as a biarc; saving another $1-\pi$ credits to pay $d(\mathcal{C}_r)$ if $\pi\le 1/4$.
If $\mathcal{C}_M=\mathcal{C}_e$, we redraw $\mathcal{C}_M$ so that
vertex $c_1$ is put into the right pocket instead. Costs and the deficit are
the same as for the original drawing. Now $e$ is a pocket and we proceed with
Case~2b.4 below. 
    \begin{figure}[tb]
    \centering
    \includegraphics[page=36]{openConfigurations}
  \caption{Alternative drawing to handle open configuration $\mathcal{C}_\iota$ if $\deg_{G_i}(v_i)\ge 5$.}
  \label{fig:openConfigurationIplusdeg2:2}
\end{figure}
If $\mathcal{C}_M=\mathcal{C}_\iota$, we save many credits by redrawing
$\mathcal{C}_M$ to move both $c_1$ and $c_2$ into $p$ so that $c_2,c_1,v_i$
appear in this left-to-right-order (see
\figurename~\ref{fig:openConfigurationIplusdeg2:2}).
Otherwise, $\mathcal{C}_M=\mathcal{C}_j$ and we move vertex
$c_2\in\mathcal{C}_M$ into $p$ s.t. both green edges underneath become
mountains instead of biarcs. As $v_i$ covers $c_1\in\mathcal{C}_M$, we
steal the credit of all three mountains incident to $c_1$, resulting in a huge
gain.

\emph{Case 2b.3}: $e$ is a mountain and part of $\mathcal{C}_r$, i.e.,
$\mathcal{C}_r=\mathcal{C}_\iota$. $v_i$ covers the left endpoint of the left
mountain of $\mathcal{C}_r$. We steal the credit from the black mountain
underneath to pay $d(\mathcal{C}_r)$.

\emph{Case 2b.4:} $e$ is a pocket (possibly of an open configuration). Then,
$\mathcal{C}_r\in\{\mathcal{C}_a, \mathcal{C}_b, \mathcal{C}_c, \mathcal{C}_g,
\mathcal{C}_j, \allowbreak \mathcal{C}_k\}$. If
$\mathcal{C}_r\in\{\mathcal{C}_j,\mathcal{C}_k\}$, then $\mathcal{C}_r$ has two
pockets and $p$ can be either one. If $p$ is the left pocket, $e$ is
not part of $\mathcal{C}_r$. We move vertices $c_1,c_2\in\mathcal{C}_r$ into
$e$ to the right of $v_i$. This allows to redraw the black edge underneath
$c_1$ and $c_2$ as a mountain rather than as a biarc (see
\figurename~\ref{fig:payfordcr:1}--\ref{fig:payfordcr:2}). The left endpoint
of this new mountain is covered by $v_i$ and we gain $1-\pi$ credits paying
$d(\mathcal{C}_r)$ if $\pi\le 1/6$. An analogous rearrangement handles cases
$\mathcal{C}_r\in\{\mathcal{C}_a, \mathcal{C}_b, \mathcal{C}_c,
\mathcal{C}_g\}$. It remains to address the case where $p$ is the right pocket
of $\mathcal{C}_r\in\{\mathcal{C}_j,\mathcal{C}_k\}$.

\begin{figure}[tb]
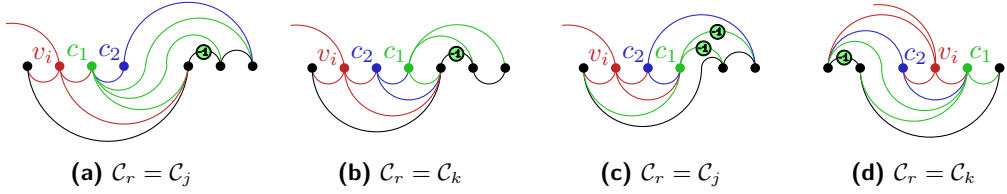

  \centering
   \begin{subfigure}[b]{.24\textwidth}
    \centering
    \includegraphics[page=30]{openConfigurations}
	\subcaption{$\mathcal{C}_r=\mathcal{C}_j$}     
    \label{fig:payfordcr:1}   
    \end{subfigure}\hfil
    \begin{subfigure}[b]{.225\textwidth}
    \centering
    \includegraphics[page=31]{openConfigurations}
    \subcaption{$\mathcal{C}_r=\mathcal{C}_k$}    
    \label{fig:payfordcr:2}    
    \end{subfigure}\hfil
    \begin{subfigure}[b]{.225\textwidth}
    \centering
    \includegraphics[page=33]{openConfigurations}
	\subcaption{$\mathcal{C}_r=\mathcal{C}_j$}       
    \label{fig:payfordcr:3} 
    \end{subfigure}\hfil
    \begin{subfigure}[b]{.21\textwidth}
    \centering
    \includegraphics[page=32]{openConfigurations}
	\subcaption{$\mathcal{C}_r=\mathcal{C}_k$}       
    \label{fig:payfordcr:4} 
    \end{subfigure}
  \caption{How to pay $d(\mathcal{C}_r)$ if $v_i$ is inserted into a pocket of
    $\mathcal{C}_j$ and $\mathcal{C}_k$.}
  \label{fig:payfordcr}
\end{figure}

If $\mathcal{C}_r=\mathcal{C}_j$, we move both $c_2$ and $v_i$ to the left
pocket of $\mathcal{C}_r$, thereby allowing both edges from $c_1$ to the right
to be drawn as mountains (see \figurename~\ref{fig:payfordcr:3}). As $c_1$ is
covered by $v_i$, we can steal the credits from these two mountains, which is
more than enough to pay $d(\mathcal{C}_r)$.

For $\mathcal{C}_r=\mathcal{C}_k$, consider edge $d$ on $C_{i-1}$ to the
left of $e$.  $d$ and its left endpoint are covered by $v_i$. If $d$ is
a pocket (of an open configuration or not), we redraw $\mathcal{C}_r$ by
moving $c_1$ to the pocket underneath $\mathcal{C}_r$ and $c_2$ to $d$. Then we
insert $v_i$ into $d$ to the left of $c_2$ (see
\figurename~\ref{fig:payfordcr:4}). Compared to the original drawing, we gain
two credits to pay $d(\mathcal{C}_r)$.  Otherwise, $d$ is a mountain. If
$d$ is part of $\mathcal{C}_\ell$ or $d$ is not part of an open configuration,
we gain $1-\pi$ credits by covering its left endpoint, which pays
$d(\mathcal{C}_r)$ if $\pi\le 1/6$. Next, suppose that $d$ is part of open
configuration $\mathcal{C}'\ne\mathcal{C}_\ell$. Then
$\mathcal{C}'\in\{\mathcal{C}_a, \mathcal{C}_b, \mathcal{C}_c, \mathcal{C}_d,
\mathcal{C}_e, \mathcal{C}_g, \mathcal{C}_\iota, \mathcal{C}_j, \mathcal{C}_k\}$ and
$\mathcal{C}'$ and all its vertices are fully covered by $v_i$. For
$\mathcal{C}'\in\{\mathcal{C}_a, \mathcal{C}_b, \mathcal{C}_d, \mathcal{C}_g,
\mathcal{C}_\iota, \mathcal{C}_k\}$ there is at least one more mountain below a
mountain of $\mathcal{C}'$ from which we steal its credit to pay
$d(\mathcal{C}_r)$ if $\pi\le 1/5$. For $\mathcal{C}'=\mathcal{C}_e$, redraw
$\mathcal{C}'$ by moving $c_1$ into the right pocket underneath. The costs are
unchanged, $d$ is a pocket now and we continue as described above.
For $\mathcal{C}'=\mathcal{C}_c$ we move $c_1$ to $e$ so that the black biarc
underneath can be redrawn as a mountain. As a result, we gain $1-\pi$ credits to
pay $d(\mathcal{C}_r)$. Case
$\mathcal{C}'=\mathcal{C}_j$ can be resolved symmetrically, gaining $2-\pi$ credits to pay
$d(\mathcal{C}_r)$.

\subparagraph*{Case 2c: $v_i$ covers at least one pocket but no pocket incident to
  $r_i$.} We insert $v_i$  into the rightmost covered pocket $p$ which
covers at least one mountain to the right of $p$. All edges
from $v_i$ (strictly) to the right of $p$ are drawn as mountains and we put $1$
credit on them stolen from the mountain underneath (cf.
Lemma~\ref{lem:defaultApproach1}). We proceed as in Case~2b to the left of $v_i$, where the same analysis holds for the covered open configurations.

By~(1) no fully covered open configuration lies strictly to the right of
$p$. However, $p$ may be part of an open configuration $\mathcal{C}$ extending to the right of $p$. In such a case, either
$\mathcal{C}=\mathcal{C}_\ell$ or $\mathcal{C}$ is fully covered by $v_i$
(because there is no open configuration with a pocket-mountain subsequence in
its profile that does not end with this mountain). If
$\mathcal{C}=\mathcal{C}_\ell$, then we do not account for it here but in the
reverse drawing. So  assume that $\mathcal{C}$ is fully covered by
$v_i$.

If
$\mathcal{C}\in\{\mathcal{C}_a, \mathcal{C}_b, \mathcal{C}_d, \mathcal{C}_g,
\mathcal{C}_k\}$, there is at least one more mountain underneath the
mountain of $\mathcal{C}$ whose left endpoint is also covered by $v_i$. We steal the
credit of this mountain  to pay $d(\mathcal{C})$. If
$\mathcal{C}=\mathcal{C}_\iota$,   there is an extra mountain
underneath the left mountain of $\mathcal{C}$ whose left endpoint is covered by
$v_i$. The extra credit pays $d(\mathcal{C})$. For
$\mathcal{C}=\mathcal{C}_j$, we use the same rearrangement as in
\figurename~\ref{fig:payfordcr:3} to gain a large surplus, except $v_i$ also covers $c_2$. Only $\mathcal{C}\in\{\mathcal{C}_c, \mathcal{C}_e\}$ remains. 

If $\mathcal{C}=\mathcal{C}_e$, we redraw $\mathcal{C}$ by putting $c_1$
along with $v_i$ into the right pocket of $\mathcal{C}$. The costs and deficit
are the same as for the original drawing, however, we save the credit that
used to be on the mountain from $v_i$ to the rightmost vertex of
$\mathcal{C}$. We use this credit to pay $d(\mathcal{C})$. 

Finally, suppose $\mathcal{C}=\mathcal{C}_c$. If $\mathcal{C}=\mathcal{C}_r$,
$v_i$ has at least two edges to vertices strictly to the left of $p$ and we
can argue exactly as in Case~2b  by considering the edge $e$ to the left of
$p$. Otherwise, $v_i$ covers the mountain $m$ immediately to the right of
$p$. If $m$ is the leftmost edge of an open configuration, this
configuration is
$\mathcal{C}_r\in\{\mathcal{C}_f,\mathcal{C}_h,\mathcal{C}_\iota\}$ and
there is another mountain $m'$ underneath $m$. As $v_i$ covers the left endpoint
of both $m$ and $m'$, we steal the credit of $m'$ to pay
$d(\mathcal{C})+d(\mathcal{C}_r)\le 3\pi+4\pi$, as long as $\pi\le
1/7$. Otherwise, $m$ is not part of an open configuration and we push down $m'$
to insert both $c_1\in\mathcal{C}$ and $v_i$ there. As a result, we draw the
edge underneath $c_1$ in $\mathcal{C}$ in the original drawing (see
\figurename~\ref{fig:deg5Cc:1}) as a mountain rather than as a biarc, gaining
$1$ credit. Other costs remain the same as drawing $m'$ as a biarc is
compensated by having one edge less from $v_i$ to vertices on the right (see
\figurename~\ref{fig:deg5Cc:2}).

\begin{figure}[tb]
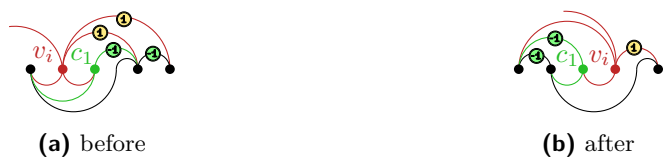

  \centering
  \begin{subfigure}[b]{.4\textwidth}
    \centering
    \includegraphics[page=34]{openConfigurations}
	\subcaption{before}     
    \label{fig:deg5Cc:1}   
    \end{subfigure}\hfil
    \begin{subfigure}[b]{.4\textwidth}
    \centering
    \includegraphics[page=35]{openConfigurations}
    \subcaption{after}  
    \label{fig:deg5Cc:2}  
    \end{subfigure}
  \caption{Redrawing if $p$ is part of $\mathcal{C}=\mathcal{C}_c$.}
  \label{fig:deg5Cc}
\end{figure}

Regardless of whether or not $p$ is part of an open configuration, if
$\mathcal{C}_r$ exists we still need to pay its debt. The case that $p$ is part
of $\mathcal{C}_r$ has been discussed above. Hence suppose that $\mathcal{C}_r$
exists but $p$ is not part of it. Recall that by~(1) no fully covered open
configuration lies strictly to the right of $p$. Hence $\mathcal{C}_r$ is only
partially covered by $v_i$ and has a mountain $m$ as its leftmost edge. Therefore
$\mathcal{C}_r\in\{\mathcal{C}_f, \mathcal{C}_h, \mathcal{C}_\iota\}$ where
$m$ has another mountain $m'$ underneath. As $v_i$ covers the left
endpoint of $m$ and $m'$, we gain an extra credit from $m'$ to pay $d(\mathcal{C}_r)$.

\subparagraph*{Case 3: $\deg_{G_i}(v_i) \in \{3,4\}$}
Cases $\deg_{G_i}(v_i) = 3$ and $\deg_{G_i}(v_i) = 4$ are treated together as in many subcases similar arguments can be used. 
Drawings referring to both cases illustrate Case $\deg_{G_i}(v_i) = 4$ as removing the leftmost vertex yields the drawing for $\deg_{G_i}(v_i) = 3$.

\subparagraph*{Case~3a: $v_i$ covers at least the two leftmost edges of configuration \OC{r}.} 
Note that, the case where $v_i$ covers the two rightmost edges of \OC{\ell} follows symmetrically. 

\emph{Case 3a.1:} $\deg_{G_i}(v_i)=3$. 
Here, $v_i$ covers a pocket in both orientations and is inserted above a $PM$ configuration in forward  or above an $MP$ configuration in reverse orientation unless $\mathcal{C}_r = \mathcal{C}_f$. 
In this orientation, inserting $v_i$ costs $0$ credits  while in the remaining orientation, $v_i$ costs at most $1$ by Lemma~\ref{lem:defaultApproach1}. 
Hence, $v_i$ can pay the debt of the (partially) covered open configuration  yielding a total cost of at most $1+5\pi$ for $v_i$, i.e., at most $\alpha$ for $\pi \le 1/6$. 
If $\mathcal{C}_r = \mathcal{C}_f$, we push down the mountain covered by $c_1$ and place $c_1$ and $v_i$ above the new biarc; see Fig~\ref{fig:case3a1}. As a result, we do not have to charge any edge incident to $c_1$ and only have to pay for a pocket and a mountain incident to $v_i$. As a pocket is covered, the forward orientation costs $1$. Since the reverse orientation is symmetric, the total cost is $2$.

\begin{figure}[b]
  \centering
  {\includegraphics[page=22]{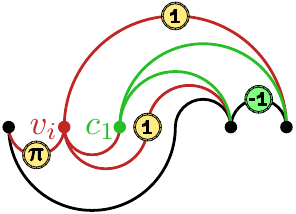}}
  \caption{Non-default subcase of Case~3a.1.}
  \label{fig:case3a1}
\end{figure}

\emph{Case 3a.2:} $\deg_{G_i}(v_i) = 4$ and $v_i$ completely covers \OC{r}. 
Consider the forward orientation.  

If \OC{r} $\in \{\mathcal{C}_a,\mathcal{C}_b,\mathcal{C}_c,\mathcal{C}_d,\mathcal{C}_g\}$, $v_i$ covers vertex $c_1$ of \OC{r}. 
We redraw \OC{r} placing both $v_i$ and $c_1$ above $e_\ell$ (if required we push down $e_\ell$); see Fig.~\ref{fig:case3a2-1}. 
While this creates a biarc incident to $v_i$, all mountains incident to $c_1$ are covered in the final drawing and hence do not need to be charged. Moreover, placing $c_1$ above $e_\ell$ allows to release the credit on the leftmost mountain covered by $c_1$. This results in a cost of at most $1+\pi$ credits in the forward orientation for both vertices. In the reverse orientation, $c_1$ can cost at most $1+\pi$ but creates a pocket for $v_i$ which then by Lemma~\ref{lem:defaultApproach1} will cost at most $1-\pi$. By combining these costs with the debt of \OC{\ell}, inserting $v_i$ and $c_1$ will cost at most $3+6\pi \le 2\alpha$ when $\pi \le 1/8$.

\begin{figure}[t]
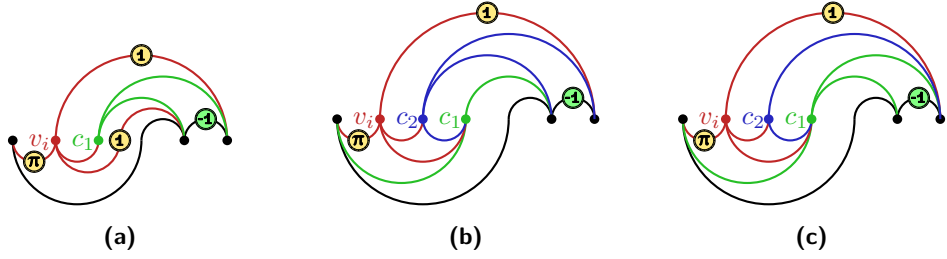

 \centering%
 \begin{subfigure}[b]{.3\textwidth}
    \centering
    \includegraphics[page=1,scale=1]{newDegree3And4figures}
	\subcaption{}       
    \label{fig:case3a2-1} 
    \end{subfigure}\hfil
    \begin{subfigure}[b]{.3\textwidth}
    \centering
    \includegraphics[page=2,scale=1]{newDegree3And4figures}
    \subcaption{}  
    \label{fig:case3a2-2}  
    \end{subfigure}\hfil
    \begin{subfigure}[b]{.3\textwidth}
    \centering
    \includegraphics[page=3,scale=1]{newDegree3And4figures}
    \subcaption{}
    \label{fig:case3a2-3}    
    \end{subfigure}
  \caption{Figures for Case~3a.2.}
  \label{fig:case3a2}
\end{figure}

Suppose $\mathcal{C}_r \in \{\mathcal{C}_f,\mathcal{C}_h, \mathcal{C}_e\}$. 
If $\mathcal{C}_r \in \{\mathcal{C}_f,\mathcal{C}_h\}$, $v_i$ is placed into pocket $e_r$ costing at most $1-\pi$ by Lemma~\ref{lem:defaultApproach1}. 
If $\mathcal{C}_r = \mathcal{C}_e$, we achieve the same cost for $v_i$ by putting $c_1$ of \OC{r} in its right covered pocket instead of its left one. 
In each of $\{\mathcal{C}_f,\mathcal{C}_h, \mathcal{C}_e\}$, observe that the left endpoint of \OC{r} is covered without creating a biarc incident to $c_1$, i.e., the mountain incident to $c_1$ is covered and does not need a credit. 
Hence, the total cost for the forward orientation drawing here is at most $1-\pi$. 
In the reverse orientation, $c_1$ can cost at most $1+\pi$ but creates a pocket for $v_i$ which then by Lemma~\ref{lem:defaultApproach1} will cost at most $1-\pi$. 
Therefore, together with the debt of \OC{\ell}, inserting $v_i$ and $c_1$ here can cost at most $3+4\pi \le 2\alpha$ as long as $\pi \le 1/6$.

It remains to consider $\mathcal{C}_r \in \{\mathcal{C}_\iota,\mathcal{C}_j, \mathcal{C}_k\}$. 
For $\mathcal{C}_r \in \{\mathcal{C}_\iota,\mathcal{C}_j\}$, the default drawing is used for $c_1$ of $\mathcal{C}_r$ and then $v_i$ and $c_2$ of $\mathcal{C}_r$ are put on $e_\ell$ which is a pocket; see Figs.~\ref{fig:case3a2-2} and~\ref{fig:case3a2-3}. 
Since $v_i$ covers both $c_1$ and $c_2$ and no biarc is created, mountains incident to $c_1$ and $c_2$ do not need to carry a credit. 
Further, as $v_i$ and $c_2$ are in the same pocket, $v_i$ is only incident to one mountain that steals its credit from the mountain below \OC{r}. 
As a result, the forward orientation costs at most $\pi$ for all three vertices. 
The reverse orientation costs at most $3+3\pi$ by Lemmas~\ref{lem:defaultApproach1} and~\ref{lem:defaultApproach2}, and the cost for all three vertices  is at most $3+4\pi \le 3\alpha$ as long as $\pi \le 3/7$. 
If $\mathcal{C}_r = \mathcal{C}_k$, a symmetric argument w.r.t. $\mathcal{C}_r = \mathcal{C}_j$ applies in the reverse orientation.

\begin{figure}[tb]
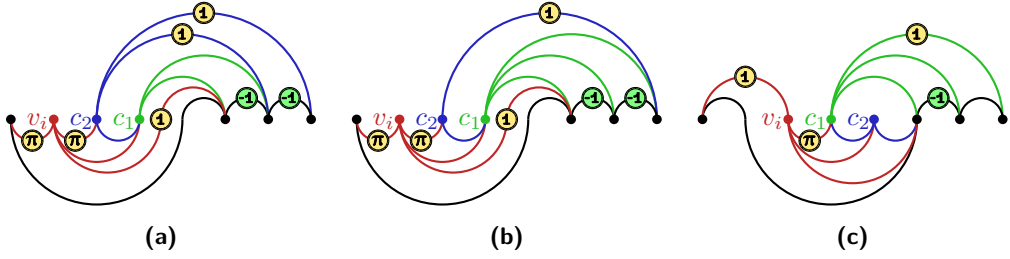

 \centering%
 \begin{subfigure}[b]{.3\textwidth}
    \centering
    \includegraphics[page=4,scale=1]{newDegree3And4figures}
    \subcaption{} 
    \label{fig:case3a3-1}   
    \end{subfigure}\hfil
    \begin{subfigure}[b]{.3\textwidth}
    \centering
    \includegraphics[page=5,scale=1]{newDegree3And4figures}
    \subcaption{} 
    \label{fig:case3a3-2}   
    \end{subfigure}\hfil
    \begin{subfigure}[b]{.3\textwidth}
    \centering
    \includegraphics[page=6,scale=1]{newDegree3And4figures}
    \subcaption{} 
    \label{fig:case3a3-3}  
    \end{subfigure}
  \caption{Figures for Case~3a.3.}
  \label{fig:case3a3}
\end{figure}

\emph{Case 3a.3:} $\deg_{G_i}(v_i) = 4$,  $\mathcal{C}_r \in \{\mathcal{C}_\iota,\mathcal{C}_j,\mathcal{C}_k\}$  and $v_i$ covers the two leftmost edges of~$\mathcal{C}_r$.  
If $\mathcal{C}_r \in \{\mathcal{C}_\iota,\mathcal{C}_j\}$, in the forward orientation $v_i$ and $c_1$ and $c_2$ of configuration \OC{r} are placed above edge $e_\ell$ which is pushed down if needed; see Figs.~\ref{fig:case3a3-1} and~\ref{fig:case3a3-2}. 
Observe that $v_i$ is incident to a biarc and to no mountains, whereas $c_1$ of \OC{r} is covered and its incident mountains do not need to carry credits. 
Finally, $c_2$ is incident to at most two mountains. 
As two mountains below \OC{r} are covered and two pockets incident to $v_i$ are created, a total cost of $1+2\pi$ suffices for the forward orientation drawing of all three vertices. 
By Lemmas~\ref{lem:defaultApproach1} and~\ref{lem:defaultApproach2}, the reverse orientation cannot cost more than $3+3\pi$. Hence, the total cost is at most $4+ 10\pi$ for all three vertices and $d(\mathcal{C}_\ell) \leq 5\pi$. 
If  $\mathcal{C}_r = \mathcal{C}_k$, all of $v_i$ and $c_1$ and $c_2$ of \OC{r} are placed in the reverse orientation above the edge $e_\ell$ which we push down as needed; see Fig.~\ref{fig:case3a3-3}. 
As a result, we pay for one mountain incident to $c_1$ and one mountain and one valley incident to $v_i$ while also covering the left mountain of \OC{r} without creating a biarc. 
Hence, the reverse orientation of all three vertices costs $1 + \pi$. 
As the forward orientation costs at most $3+3\pi$ and \OC{\ell}'s debt is at most $5\pi$, all three vertices are inserted for $4+9\pi \le 3\alpha$ as long as $\pi \le 1/6$.

\subparagraph*{Case~3b: $v_i$ covers the leftmost edge of configuration $\mathcal{C}_r$.}  
This case is symmetric to $v_i$ covering the rightmost edge of configuration \OC{\ell}. Note
 that if $\deg_{G_i}(v_i) = 4$ and $e_m$ belongs to open configuration \OC{\ell}, one of the subcases of Case 3a applies. 
Let $e^\ast$ be the covered edge incident to the left endpoint of $\mathcal{C}_r$ from the left, i.e., if $\deg_{G_i}(v_i) = 3$, $e^\ast = e_\ell$, else $e^\ast = e_m$.

\begin{figure}[tb]
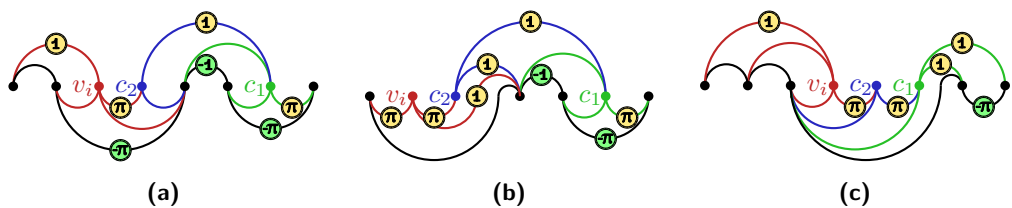

 \centering%
 \begin{subfigure}[b]{.3\textwidth}
    \centering
    \includegraphics[page=7,scale=1]{newDegree3And4figures}
    \subcaption{}  
    \label{fig:case3b1-1}  
    \end{subfigure}\hfil
    \begin{subfigure}[b]{.3\textwidth}
    \centering
    \includegraphics[page=8,scale=1]{newDegree3And4figures}
    \subcaption{}  
    \label{fig:case3b1-2}
    \end{subfigure}\hfil
    \begin{subfigure}[b]{.3\textwidth}
    \centering
    \includegraphics[page=9,scale=1]{newDegree3And4figures}
    \subcaption{} 
    \label{fig:case3b1-3}   
    \end{subfigure}
  \caption{Figures for the forward orientation of Case~3b.1.}
  \label{fig:case3b1}
\end{figure}

\emph{Case 3b.1:} $\mathcal{C}_r = \mathcal{C}_k$. 
Consider two subcases for $e^\ast$ to bound the cost of the forward orientation.
If $e^\ast$ is a pocket, both $v_i$ and $c_2$ are put above $e^\ast$ and $c_1$ is put into the pocket covered by \OC{r}; see Fig.~\ref{fig:case3b1-1}. 
No biarcs are created and it suffices to pay for mountain $(c_1,c_2)$, two pockets, and pocket or mountain $(v_i,\ell_i)$. Also two pockets and a mountain which no longer needs its credit are covered. 
Hence, at most $1$ credit is needed for all of $v_i$, $c_1$ and $c_2$. 
If $e^\ast$ is a mountain, we distinguish two subcases based on $\deg_{G_i}(v_i)$. 
If $\deg_{G_i}(v_i) = 3$, $e^\ast$ is pushed down and $v_i, c_1,$ and $c_2$ are placed as in the pocket case; see Fig.~\ref{fig:case3b1-2}. 
Since $\deg_{G_i}(v_i) = 3$, $(v_i,\ell_i)$ is a pocket. 
This allows to pay for the additional mountain incident to $c_2$ and the biarc incident to $v_i$ and still have a drawing of cost $2+2\pi$ for all three vertices. 
If $\deg_{G_i}(v_i) = 4$, instead the mountain covered by $\mathcal{C}_r$ is pushed down and all of $v_i$, $c_1$ and $c_2$ are placed there; see Fig.~\ref{fig:case3b1-3}. 
We pay for mountain $(v_i,\ell_i)$, the two pockets incident to $c_2$ and the two mountains incident to $c_1$. 
However, we also cover $e^\ast$ yielding a drawing of cost $2+\pi$ for all three vertices. 
Thus, we pay at most $2+2\pi$ for the forward orientation. 

\begin{figure}[tb]
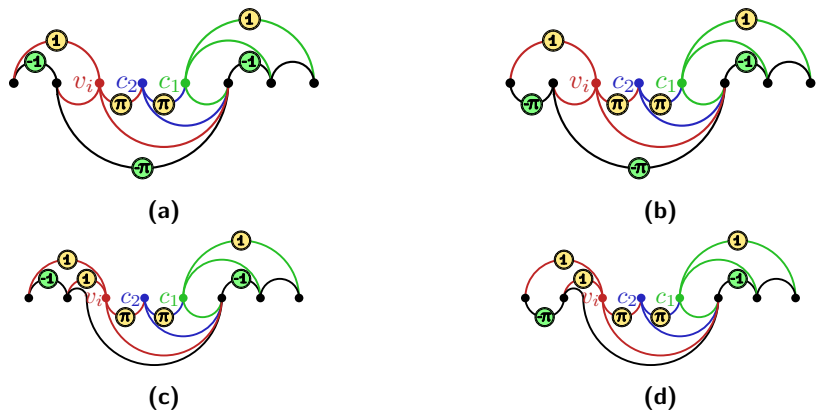

 \centering%
  \begin{subfigure}[b]{.4\textwidth}
    \centering
    \includegraphics[page=10,scale=1]{newDegree3And4figures}
    \subcaption{} 
    \label{fig:case3b1r-1}   
    \end{subfigure}\hfil
    \begin{subfigure}[b]{.4\textwidth}
    \centering
    \includegraphics[page=11,scale=1]{newDegree3And4figures}
    \subcaption{} 
    \label{fig:case3b1r-2}
    \end{subfigure}
    
    \begin{subfigure}[b]{.4\textwidth}
    \centering
    \includegraphics[page=12,scale=0.9]{newDegree3And4figures}
    \subcaption{} 
    \label{fig:case3b1r-3}
    \end{subfigure}\hfil
    \begin{subfigure}[b]{.4\textwidth}
    \centering
    \includegraphics[page=13,scale=0.9]{newDegree3And4figures}
    \subcaption{} 
    \label{fig:case3b1r-4}
    \end{subfigure}
  
  \caption{Figures for the reverse orientation of Case~3b.1.}
  \label{fig:case3b1r}
\end{figure}

In the reverse orientation, we always place all of $v_i$, $c_1$ and $c_2$ above $e^\ast$; see Fig.~\ref{fig:case3b1r}. 
Then, for $c_1$, we only have to charge the mountain to its rightmost neighbor which we can pay with the credit on the left mountain covered by $c_1$. 
Additionally, we have to pay for the pockets $(c_1,c_2)$ and $(v_i,c_2)$. 
Finally, we may have to pay for up to two mountains from $v_i$ to its left neighbors. 
Thus, it suffices to pay at most $2+2\pi$ for the reverse orientation.

Summing up both orientations, for all three vertices and $d(\mathcal{C}_\ell) \leq 5\pi$, the total cost is most $4+11\pi$ which is at most $3\alpha$ as long as $\pi \le 1/7$.

\emph{Case 3b.2:} $\mathcal{C}_r = \mathcal{C}_j$. 
Note that vertex $c_2$ of \OC{r} is not incident to $v_i$. Since $\mathcal{C}_r$ is still an open configuration, $c_2$ did not get new neighbors after being inserted. 
Hence, we can change the canonical ordering inserting $v_i$ before $c_2$. Then $\mathcal{C}_r = \mathcal{C}_g$ which is discussed later in this section. Open configurations \OC{j} and \OC{k} have been discussed and in the following $d(\mathcal{C}_\ell) \leq 4\pi$.

\emph{Case 3b.3:}  $\mathcal{C}_r = \mathcal{C}_\iota$. 
If $\deg_{G_i}(v_i)=4$, we use the same trick as in Case 3b.2 and insert $c_2$ after $v_i$, i.e., $\mathcal{C}_r = \mathcal{C}_b$ -- see Case 3b.4. 
However, if $\deg_{G_i}(v_i)=3$, we cannot do this since $v_i$ is symmetric to $c_2$ w.r.t. $c_1$. 
Instead, $v_i$ is always placed above $e_\ell$ by using the default drawing. 
We now discuss the costs of this placement depending on the state of $e_\ell$.

If $e_\ell$ is a pocket, placing $v_i$ will cost $0$ credits while  the credits of the mountain covered by \OC{r} whose left endpoint gets covered can be reclaimed. 
As a result, the forward orientation costs $-1$ credit. Since the reverse orientation costs at most $1+\pi$ and the debts of the two open configurations are at most $8\pi$, the total cost of $v_i$ is at most $9\pi \le \alpha$ as long as $\pi \le 1/4$. 

If $e_\ell$ is a mountain, in the forward drawing, we pay for one mountain and one pocket while $v_i$ covers the left endpoint of the mountain covered by \OC{r} achieving a cost of $\pi$. 
If $e_\ell$ is a mountain in reverse orientation, $\mathcal{C}_\ell \in \{\emptyset,\mathcal{C}_g,\mathcal{C}_\iota\}$ and we use the default drawing for $v_i$ costing $1+\pi$. 
If $\mathcal{C}_\ell = \emptyset$, then the total cost for inserting $v_i$ is $1+6\pi \leq \alpha$ as long as $\pi \le 1/7$. 
If $\mathcal{C}_\ell \in \{\mathcal{C}_g,\mathcal{C}_\iota\}$, by symmetry, the cost is again reduced to $\pi$ and achieving a total cost of at most $10\pi$. 
If $e_\ell$ is a pocket in reverse orientation, $v_i$ and $c_1$ are put in $e_\ell$ while we push down the rightmost edge covered to put $c_2$ there; see Fig.~\ref{fig:case3b3}. 
Then, we have to pay for the two mountains incident to $c_2$ and the two pockets incident to $v_i$ but we cover a pocket and a mountain completely. Hence, the total cost for inserting all three vertices is $1+2\pi$. Since the forward orientation costs at most $3+3\pi$  and $d(\mathcal{C}_\ell) \leq 4\pi$, all three vertices cost at most $4+9\pi \le 3\alpha$ as long as $\pi \le 1/6$. Since \OC{\iota}, \OC{j} and \OC{k} are handled, in the following $d(\mathcal{C}_\ell) \leq 3\pi$.

\begin{figure}[tb]
  \centering
  {\includegraphics[page=24]{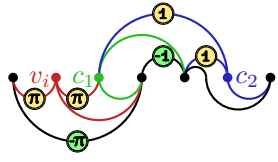}}
  \caption{Non-default subcase of Case~3b.3.}
  \label{fig:case3b3}
\end{figure}

\begin{figure}[tb]
  \centering
  {\includegraphics[page=15]{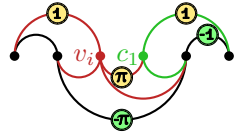}}
  \caption{Forward orientation drawing for Case~3b.4.}
  \label{fig:case3b4}
\end{figure}

\emph{Case 3b.4:} $\mathcal{C}_r \in \{\mathcal{C}_a,\mathcal{C}_b,\mathcal{C}_c,\mathcal{C}_d,\mathcal{C}_g\}$. As in previous cases, consider $e^\ast$ and $\deg_{G_i}(v_i)$.

If $e^\ast$ is a pocket, we put both $v_i$ and $c_1$ in $e^\ast$; see Fig.~\ref{fig:case3b4}, paying for pocket $(v_i,c_1)$ (which can be done using $e^\ast$'s $\pi$ credits) and edge $(v_i,\ell_i)$ which may be pocket or mountain depending on the degree of $v_i$. 
The mountains incident to $c_1$ can be paid by the credits of mountains covered by configuration $\mathcal{C}_r$. 
Hence, placing $v_i$ and $c_1$ costs at most $1$ in forward orientation. 
In reverse orientation, inserting both costs at most $2+2\pi$ by Lemmas~\ref{lem:defaultApproach1} and~\ref{lem:defaultApproach2}. 
Together with $d(\mathcal{C}_\ell) \leq 3\pi$, $v_i$ and $c_1$ cost at most $3+5\pi \leq 2\alpha$ as long as $\pi \le 1/7$. 

If $e^\ast$ is a mountain and $\deg_{G_i}(v_i) = 4$, we use the default drawing putting $v_i$ in pocket $e_r$. This costs at most $0$ since $e^\ast$ will be covered and its credit can be used to pay for new mountain $(v_i,\ell_i)$. Since the reverse drawing costs at most $1-\pi$ by Lemmas~\ref{lem:defaultApproach1} and~\ref{lem:defaultApproach2} and $d(\mathcal{C}_\ell) \leq 3\pi$, for $v_i$ in this scenario, it suffices to pay at most $1+2\pi \le \alpha$ as long as $\pi \le 1/3$. 

However, if $\deg_{G_i}(v_i) = 3$ and $e^\ast (= e_\ell)$ is a mountain, more work is needed. 
Using the default drawing in the forward orientation, placing $c_1$ costs $1+\pi$ while inserting $v_i$ costs $1$ as $c_1$ creates a pocket for $v_i$. We distinguish further cases concerning $e_\ell$ in reverse orientation. 

\begin{figure}[tb]
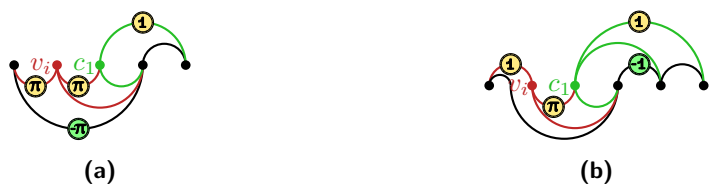

 \centering%
 	\begin{subfigure}[b]{.4\textwidth}
    \centering
    \includegraphics[page=16,scale=1]{newDegree3And4figures}
    \subcaption{}
    \label{fig:case3b4r-1}        
    \end{subfigure}\hfil
    \begin{subfigure}[b]{.4\textwidth}
    \centering
    \includegraphics[page=17,scale=1]{newDegree3And4figures}
    \subcaption{}
    \label{fig:case3b4r-2}
    \end{subfigure}
  \caption{Figures for the reverse orientation of Case~3b.4.}
  \label{fig:case3b4r}
\end{figure}

If $e_\ell$ is a pocket in reverse orientation, we place both $v_i$ and $c_1$ above it; see Fig.~\ref{fig:case3b4r-1}. 
We have to pay for the two pockets incident to $v_i$ and the mountain between $c_1$ and its rightmost neighbor which costs at most $1+\pi$  since a pocket is covered. 
Thus, since $d(\mathcal{C}_\ell) \leq 3\pi$, for $v_i$ and $c_1$, it suffices to pay at most $3+5\pi \leq 2\alpha$ as long as~$\pi \le 1/7$.

If $e_\ell$ is a mountain in reverse orientation, we further distinguish based on the type of $\mathcal{C}_r$. If $\mathcal{C}_r \in \{\mathcal{C}_a,\mathcal{C}_b\}$, we push down $e_\ell$ and place $v_i$ and $c_1$ above it; see Fig.~\ref{fig:case3b4r-2}. 
Then we have to pay for two mountains and a pocket while we can reclaim the credit on the left mountain covered by $c_1$, yielding a cost of $1+\pi$ in the reverse orientation. Again, we pay at most $3+5\pi$ for both vertices. 
If $\mathcal{C}_r = \mathcal{C}_c$, we create a new open configuration of type $\mathcal{C}_\iota$. 
If $\mathcal{C}_r \in \{\mathcal{C}_d,\mathcal{C}_g\}$, we use the default drawings for $v_i$ and $c_1$. 
Since $v_i$ covers the mountain $e_\ell$ and is placed in a pocket created by $c_1$, the credits on the edges incident to $v_i$ can be taken from covered edges and it suffices to pay for $c_1$ which costs at most $1$ as it drops into a pocket. Thus, with the debt of \OC{\ell}  we pay at most $3+4\pi$ for both vertices.
As a result, in this scenario, for both vertices it suffices to pay $3+5\pi$ which is at most $2\alpha$ for $\pi \le 1/7$.

\emph{Case 3b.5:}  $\mathcal{C}_r \in \{\mathcal{C}_f,\mathcal{C}_h\}$. 
Here, using the default drawing for $v_i$ costs at most $1+\pi$ and preserves the two stacked mountains at the left endpoint of $\mathcal{C}_r$. 
Hence, the credit of the mountain covered by $\mathcal{C}_r$ can be used to reduce the cost in the forward orientation to at most $\pi$. 
Since the reverse orientation costs at most $1+\pi$, $d(\mathcal{C}_r) \leq 2\pi$  and $d(\mathcal{C}_\ell) \leq 3\pi$, we conclude that, to insert $v_i$, it suffices to pay at most $1+7\pi \leq \alpha$ as long as $\pi \le 1/8$.

By symmetry, we discussed all cases where $\mathcal{C}_\ell \not \in \{\emptyset,\mathcal{C}_d\}$; i.e., for the last case $d(\mathcal{C}_\ell) \leq \pi$.

\emph{Case 3b.6:} $\mathcal{C}_r = \mathcal{C}_e$. 
We use default drawings for both $v_i$ and $c_1$ of configuration $\mathcal{C}_r$ for the forward orientation. 
By Lemma~\ref{lem:defaultApproach1}, inserting $v_i$ costs $1-\pi$ whereas, by construction of $\mathcal{C}_r$, inserting $c_1$ also costs $1-\pi$ yielding a total cost of $2-2\pi$ for the forward orientation.
\begin{figure}[tb]
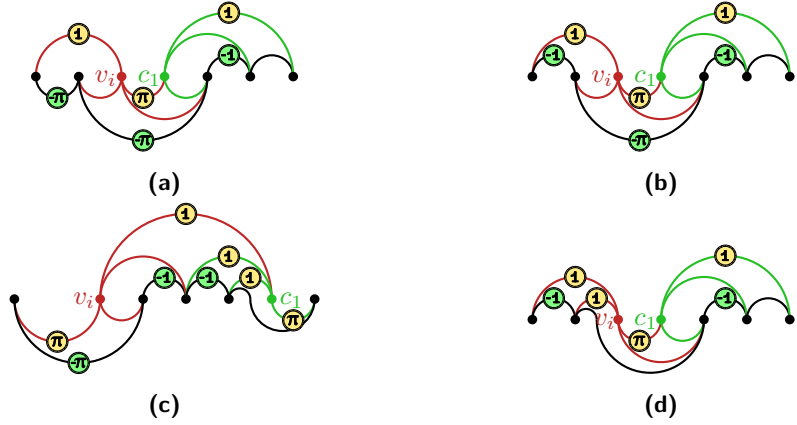

 \centering%
 \begin{subfigure}[b]{.4\textwidth}
    \centering
    \includegraphics[page=18]{newDegree3And4figures}
    \subcaption{}
    \label{fig:case3b6r-1}
    \end{subfigure}\hfil
    \begin{subfigure}[b]{.4\textwidth}
    \centering
    \includegraphics[page=19]{newDegree3And4figures}
    \subcaption{}
    \label{fig:case3b6r-2}
    \end{subfigure}
    
    \begin{subfigure}[b]{.4\textwidth}
    \centering
    \includegraphics[page=23]{newDegree3And4figures}
    \subcaption{}
    \label{fig:case3b6r-3}
    \end{subfigure}\hfil
    \begin{subfigure}[b]{.4\textwidth}
    \centering
    \includegraphics[page=21]{newDegree3And4figures}
    \subcaption{}
    \label{fig:case3b6r-4}
    \end{subfigure}
  \caption{Figures for the reverse orientation of Case~3b.6.}
  \label{fig:case3b6r}
\end{figure}

In reverse orientation, if $e^\ast$ is a pocket, we put both $v_i$ and $c_1$ in it; see Figs.~\ref{fig:case3b6r-1} to~\ref{fig:case3b6r-2}. If $\deg_{G_i}(v_i) = 4$ and $e^\ast$ is not a pocket but the leftmost covered edge by $v_i$ is, we keep the original drawing of \OC{r} but place $v_i$ in the pocket; see Fig.~\ref{fig:case3b6r-3}. Otherwise, we push down $e^\ast$ and put both $c_1$ and $v_i$ there; see Fig.~\ref{fig:case3b6r-4}, paying for the mountain from $c_1$ to its rightmost neighbor with the credit on the left mountain covered by $c_1$. 
Additionally, we pay for pocket $(v_i,c_1)$ and edge $(v_i,\ell_i)$ which may be a pocket or a mountain. 
If $e^\ast$ is a mountain and $\deg_{G_i}(v_i) = 4$, we pay for another mountain, however, we also cover another mountain whose credit we can claim. 
Hence, we may pay at most $1+\pi$ in the reverse orientation. 

Since $d(\mathcal{C}_\ell) \leq \pi$, for both $v_i$ and $c_1$, it suffices to pay at most $3 \leq 2\alpha$ as long as $\pi \le 1/2$. 

\section{SAT Formulation}
\label{app:sat}

We built a SAT formulation to check if an input graph can
be realized as a (monotone) biarc diagram with at most a  given number of biarcs $\kappa$.
Our implementation and formulation are based on an 
implementation to compute book embeddings~\cite{DBLP:conf/gd/Bekos0Z15}, with
the following changes:
\begin{itemize}
\item[--] The SAT formula was restricted to two pages.
\item[--] For each edge $e=(u,v)$, we insert both a dummy vertex $d_{e}$ representing the bend and variables $\beta_e^i$ into the SAT formula for $1 \leq i \leq \kappa$. We may enforce $d_e$ to be positioned in between $u$ and $v$ in order to compute a monotone biarc diagram. Further, $\beta_e^i=\texttt{true}$ indicates that $e$ is the $i$-th biarc the SAT solver chooses to create. Note that $\beta_e^i$ can be true for more than one $i$.
\item[--] Each edge $e=(u,v)$ must be assigned to a page only if it is not  a biarc. Otherwise, we enforce its two \emph{half-edges} $(u,d_e)$ and $(v,d_e)$ to be assigned to the different pages.
\item[--] Intersections are only checked for edges that are actually assigned to a page.
\end{itemize}

The SAT formulation yields a formula of size $O(n^3)$ for a graph on $n$
vertices. Thus, its computational use is limited to small values of $n$. For the
case where $G$ is a \emph{Kleetope} of a triangulation $T$, i.e., $T$
plus a vertex in each face that is connected to the three vertices on the
boundary of the face, we can reduce the problem size somewhat. With our implementation, we verified that no Kleetope on $n'=3n-4$ vertices
derived from a triangulation of  $n\le 14$ vertices needs more than
$\lfloor (n'-8)/3 \rfloor$ biarcs even if the outerface is prescribed.

\subsection*{Vertex Ordering}

We enumerate vertices from 1 to n and edges from 1 to m. For each edge $e_k$, we insert a dummy vertex with index $n+k$. For each pair of vertices $v_i,v_j$ (including dummy vertices), there is a variable $\sigma_{i,j}$ that is true if and only if $v_i$ appears before $v_j$ on the spine.

We ensure a isomorphic mapping of vertices (including dummy vertices) to $n+m$ positions on the spine by adding constraints which 
\begin{enumerate}
\item require vertex $v_i$ to be located either before or after vertex $v_j$ for $1 \leq i < j \leq n+m$:
\begin{eqnarray}
\left(\sigma_{i,j} \vee \sigma_{j,i}\right) \wedge \left(\neg \sigma_{i,j} \vee \neg \sigma_{j,i}\right) & \forall 1 \leq i < j \leq n+m
\end{eqnarray}
\item ensure transitivity in the sense that if $v_i$ is located before $v_j$ and $v_j$ before $v_k$ that then also $v_i$ is located before $v_k$ on the spine:
\begin{eqnarray}
\left(\neg \sigma_{i,j} \vee \neg \sigma_{j,k} \vee \sigma_{i,k}\right) \wedge \left( \sigma_{i,j} \vee \sigma_{j,k} \vee \neg \sigma_{i,k}\right) & \forall 1 \leq i < j < k \leq n+m 
\end{eqnarray} 
\end{enumerate}

\subsection*{Monotone Biarcs}

The SAT formulation can be extended easily to ensure monotonicity of biarcs. We add two constraints for each edge $e_k=(v_i,v_j)$  that enforce to place the dummy vertex of $(v_i,v_j)$ in between $v_i$ and $v_j$ on the spine:
\begin{eqnarray}
 \left(\sigma_{i,n+k} \vee \sigma_{j,n+k}\right) \wedge \left(\neg \sigma_{i,n+k} \vee \neg \sigma_{j,n+k}\right) & \forall e_k=(v_i,v_j) \in E 
\end{eqnarray}
Conversely, if we do not add these constraints, we are able to compute optimal solutions for non-monotone biarc diagrams.

\subsection*{Bounding the Number of Biarcs}

Additionally, there exists variables $\beta_i^j$ for $1 \leq i \leq m$ and $1 \leq j \leq \kappa$. Edge $e_i$ will be drawn as a biarc if and only if $\beta_i^j$ is true for any $1 \leq j \leq \kappa$. We ensure that there are at most $\kappa$ biarcs with the following constraints:

\begin{eqnarray}
\bigwedge \limits_{1 \leq i < k \leq m}\left(\neg \beta_i^j \vee \neg \beta_k^j\right) & \forall 1 \leq j \leq \kappa 
\end{eqnarray}

Note that this already creates $O(n^3)$ clauses.

\subsection*{Page Assignment}

Next, we let the SAT solver assign edges and \emph{half edges} to pages. For each edge $e_k=(v_i,v_j)$ with $i < j$, there exist 2 half edges $h_{k,1} = (v_i,v_{n+k})$ and $h_{k,2} = (v_j,v_{n+k})$. Additionally, we also call $h_{k,0} = e_k$. We have page variables $\phi_{k,i,j}$ for $1 \leq k \leq m$, $0\leq i \leq 2$ and $1\leq j \leq 2$ which are true if and only if $h_{k,i}$ is located on page $j$.

Here we have the following constraints:


\begin{eqnarray}
\left(\phi_{k,0,1} \vee \phi_{k,0,2} \vee \phi_{k,1,1} \vee \phi_{k,1,2}\right) & \forall 1 \leq k \leq m &  \text{(Assign edge or biarc 1)}\\
\left(\phi_{k,0,1} \vee \phi_{k,0,2} \vee \phi_{k,2,1} \vee \phi_{k,2,2}\right) & \forall 1 \leq k \leq m &  \text{(Assign edge or biarc 2)}\\
\left(\phi_{k,0,1} \vee \phi_{k,0,2} \vee \bigvee \limits_{1 \leq j \leq \kappa}\beta_k^j\right)  & \forall 1 \leq k \leq m & \text{(Each edge on 1 page)}  
\end{eqnarray}

If we decide that edge $e_k$ is drawn as an biarc, we additionally require its two half edges $h_{k,1}$ and~$h_{k,2}$ to be drawn on different pages while we require half edge~$h_{k,0}$ to not be drawn:

\begin{eqnarray}
\left(\neg \beta_k^j \vee \neg \phi_{k,1,1} \vee \neg \phi_{k,2,1}\right)  & \forall 1 \leq k \leq m & \forall 1 \leq j \leq \kappa  \\
\left(\neg \beta_k^j \vee \neg \phi_{k,1,2} \vee \neg \phi_{k,2,2}\right)  & \forall 1 \leq k \leq m & \forall 1 \leq j \leq \kappa  \\
\left(\neg \beta_k^j \vee \neg \phi_{k,0,1}\right)  & \forall 1 \leq k \leq m & \forall 1 \leq j \leq \kappa  \\
\left(\neg \beta_k^j \vee \neg \phi_{k,0,2}\right)  & \forall 1 \leq k \leq m & \forall 1 \leq j \leq \kappa  
\end{eqnarray}

\subsection*{Planarity}

Intersections can only occur between half edges on the same page which also have to exist due to the corresponding edge being (not) drawn as a biarc. We have a variable $\chi_{k,k^\prime,i,i^\prime}$ for $1 \leq k < k^\prime \leq n$ and $i,i^\prime \in \{0,1,2\}$ which is true if both half edges $h_{k,i}$ and $h_{k^\prime,i^\prime}$ exist and are drawn on the same page.


\begin{eqnarray}
\bigwedge \limits_{1 \leq l \leq 2} \left(\neg \phi_{k,i,l} \vee \neg \phi_{k^\prime,i^\prime,l} \vee \chi_{k,k^\prime,i,i^\prime}\right) & \forall i,i^\prime \in \{0,1,2\} \forall 1 \leq k < k^\prime \leq m 
\end{eqnarray}

Finally, we ensure that half edges $h_{k,i}=(v_{s(k,i)},v_{t(k,i)})$ and $h_{k^\prime,i^\prime}=(v_{s(k^\prime,i^\prime)}, \allowbreak v_{t(k^\prime,i^\prime)})$ do not intersect by adding the following clauses for all half edges that do not share an endpoint:
\begin{eqnarray}
\left(\neg \chi_{k,k^\prime,i,i^\prime} \vee \sigma_{s(k,i),s(k^\prime,i^\prime)} \vee \sigma_{s(k^\prime,i^\prime),t(k,i)} \vee \sigma_{t(k,i),t(k^\prime,i^\prime)}\right) \\
\left(\neg \chi_{k,k^\prime,i,i^\prime} \vee \neg \sigma_{s(k,i),s(k^\prime,i^\prime)} \vee \neg \sigma_{s(k^\prime,i^\prime),t(k,i)} \vee \neg \sigma_{t(k,i),t(k^\prime,i^\prime)}\right)  \\
\left(\neg \chi_{k,k^\prime,i,i^\prime} \vee \sigma_{t(k,i),s(k^\prime,i^\prime)} \vee \sigma_{s(k^\prime,i^\prime),s(k,i)} \vee \sigma_{s(k,i),t(k^\prime,i^\prime)}\right) \\
\left(\neg \chi_{k,k^\prime,i,i^\prime} \vee  \neg \sigma_{t(k,i),s(k^\prime,i^\prime)} \vee \neg \sigma_{s(k^\prime,i^\prime),s(k,i)} \vee \neg \sigma_{s(k,i),t(k^\prime,i^\prime)}\right) \\
\left(\neg \chi_{k,k^\prime,i,i^\prime} \vee \sigma_{s(k,i),t(k^\prime,i^\prime)} \vee \sigma_{t(k^\prime,i^\prime),t(k,i)} \vee \sigma_{t(k,i),s(k^\prime,i^\prime)}\right) \\
\left(\neg \chi_{k,k^\prime,i,i^\prime} \vee \neg \sigma_{s(k,i),t(k^\prime,i^\prime)} \vee \neg \sigma_{t(k^\prime,i^\prime),t(k,i)} \vee \neg \sigma_{t(k,i),s(k^\prime,i^\prime)}\right) \\
\left(\neg \chi_{k,k^\prime,i,i^\prime} \vee \sigma_{t(k,i),t(k^\prime,i^\prime)} \vee \sigma_{t(k^\prime,i^\prime),s(k,i)} \vee \sigma_{s(k,i),s(k^\prime,i^\prime)}\right)\\
\left(\neg \chi_{k,k^\prime,i,i^\prime} \vee  \neg \sigma_{t(k,i),t(k^\prime,i^\prime)} \vee \neg \sigma_{t(k^\prime,i^\prime),s(k,i)} \vee \neg \sigma_{s(k,i),s(k^\prime,i^\prime)}\right)  
\end{eqnarray}

\subsection*{Kleetopes}

For Kleetopes, we further decrease the size of the SAT instance to speed up
the computation time by encoding the triangulation and treating the
additional vertices as follows: In order to connect an additional vertex $v_f$
to the three vertices forming the triangle $f=(e_{k_1},e_{k_2},e_{k_3}) \in F$, $f$ only has to allow $v_f$ access to the spine. This is true
if any edge bounding $f$ is a biarc or if not all three edges
$e_{k_1},e_{k_2},e_{k_3}$ are mountains (or pockets). We encode this as follows:

\begin{eqnarray}
\bigwedge \limits_{1 \leq l \leq 2} \left(\neg \phi_{k_1,0,l} \vee \neg \phi_{k_2,0,l} \vee \neg \phi_{k_3,0,l} \vee \bigvee_{\substack{1 \leq j \leq \kappa \\ i \in \{1,2,3\}}} \beta_{k_i}^j\right) & \forall (e_{k_1},e_{k_2},e_{k_3})\in F  
\end{eqnarray}

\subsection*{Only Up-Down Biarcs}

In some applications of monotone biarc diagrams it is crucial that all biarcs have the same shape, that is, each biarc is routed above the spine left of  its spine crossing and below the spine right of its spine crossing (or vice versa). Let $e_k=(v_i,v_j)$ be an edge such that $i < j$. Hence, half-edge $h_{k,1}$ is incident to $v_i$ whereas half-edge $h_{k,2}$ is incident to $v_j$. If $e_k$ is a biarc, we assign the half-edge incident to the left endpoint of $e_k$ to page $1$:

\begin{eqnarray}
\bigwedge \limits_{1 \leq l \leq \kappa} \left(\neg \beta_{k}^l \vee \neg \sigma_{i,j} \vee \phi_{k,1,1}\right) \left(\neg \beta_{k}^l \vee \sigma_{i,j} \vee \phi_{k,2,1}\right)&  \forall e_k=(v_i,v_j) \in E
\end{eqnarray}

This requires the second half-edge to be drawn on page $2$. The linear order variable $\sigma_{i,j}$ indicates which vertex is the left endpoint of $e_k$, hence, we can chose the appropriate half-edge to be drawn on page $1$.

\subsection*{Prescribing an Outer Face}

So far, we allowed the SAT solver to chose the outer face for the output drawing. While in book embeddings and non-monotone biarc diagrams it is easy to see that the outer face can be chosen arbitrarily, it is not clear if this also holds for monotone biarc diagrams. Moreover, by identifying the facial cycle $F$ of a triangulation $T$ with a face of a given graph $G$, we can chose a face of $G$ that we want to identify with $F$. This allows us to prescribe an outer face, for $G$ even in the normal setting where the outer face might be chosen by the layout algorithm (except for one copy of $G$ if we insert a copy of $G$ in all facial cycles of $T$). Therefore, we also added the following constraints to ensure a specific outer face.

Let $f_0 = (v_i,v_j,v_k)$ be the face of $G$ that we would like to be drawn as the outer face. In order to ensure that $f_0$ is the outermost face, we have to ensure that $f_0$'s leftmost  vertex is drawn to the left of each other vertex in the graph. Similarly, the rightmost vertex of $f_0$ has to be drawn to the right of each other vertex in the graph. We formulate that as follows:

\begin{eqnarray}
\left(\neg \sigma_{i,j} \vee \neg \sigma_{i,k} \vee \sigma_{i,l} \right) \wedge  \left(\neg \sigma_{j,i} \vee \neg \sigma_{k,i} \vee \sigma_{l,i} \right) & \forall v_l\in V\setminus\{v_i,v_j,v_k\}  \\
\left(\neg \sigma_{j,i} \vee \neg \sigma_{j,k} \vee \sigma_{j,l} \right) \wedge  \left(\neg \sigma_{i,j} \vee \neg \sigma_{k,j} \vee \sigma_{l,j} \right) & \forall v_l\in V\setminus\{v_i,v_j,v_k\}  \\
\left(\neg \sigma_{k,i} \vee \neg \sigma_{k,j} \vee \sigma_{k,l} \right) \wedge  \left(\neg \sigma_{i,k} \vee \neg \sigma_{j,k} \vee \sigma_{l,k} \right) & \forall v_l\in V\setminus\{v_i,v_j,v_k\}  
\end{eqnarray}

In addition, we require the edges $f_0$ to be drawn such that all remaining vertices will only have access to the spine if they are drawn inside $f_0$: The edge between the leftmost and the rightmost vertex is drawn on page $1$, the remaining two edges are drawn on page $2$. Let $e_{ij}=(v_i,v_j)$, $e_{ik}=(v_i,v_k)$ and $e_{jk}=(v_j,v_k)$. Then:

\begin{eqnarray}
\left(\neg \sigma_{i,j} \vee \neg \sigma_{j,k} \vee \phi_{ik,0,1} \right) \wedge  \left(\neg \sigma_{i,j} \vee \neg \sigma_{j,k} \vee \phi_{ij,0,2} \right) \wedge \left(\neg \sigma_{i,j} \vee \neg \sigma_{j,k} \vee \phi_{jk,0,2} \right)~~~\\
\left(\neg \sigma_{j,i} \vee \neg \sigma_{k,j} \vee \phi_{ik,0,1} \right) \wedge  \left(\neg \sigma_{j,i} \vee \neg \sigma_{k,j} \vee \phi_{ij,0,2} \right) \wedge \left(\neg \sigma_{j,i} \vee \neg \sigma_{k,j} \vee \phi_{jk,0,2} \right)~~~\\
\left(\neg \sigma_{j,k} \vee \neg \sigma_{k,i} \vee \phi_{ij,0,1} \right) \wedge  \left(\neg \sigma_{j,k} \vee \neg \sigma_{k,i} \vee \phi_{jk,0,2} \right) \wedge \left(\neg \sigma_{j,k} \vee \neg \sigma_{k,i} \vee \phi_{ik,0,2} \right)~~~\\
\left(\neg \sigma_{k,j} \vee \neg \sigma_{i,k} \vee \phi_{ij,0,1} \right) \wedge  \left(\neg \sigma_{k,j} \vee \neg \sigma_{i,k} \vee \phi_{jk,0,2} \right) \wedge \left(\neg \sigma_{k,j} \vee \neg \sigma_{i,k} \vee \phi_{ik,0,2} \right)~~~\\
\left(\neg \sigma_{k,i} \vee \neg \sigma_{i,j} \vee \phi_{jk,0,1} \right) \wedge  \left(\neg \sigma_{k,i} \vee \neg \sigma_{i,j} \vee \phi_{ik,0,2} \right) \wedge \left(\neg \sigma_{k,i} \vee \neg \sigma_{i,j} \vee \phi_{ij,0,2} \right)~~~\\
\left(\neg \sigma_{i,k} \vee \neg \sigma_{j,i} \vee \phi_{jk,0,1} \right) \wedge  \left(\neg \sigma_{i,k} \vee \neg \sigma_{j,i} \vee \phi_{ik,0,2} \right) \wedge \left(\neg \sigma_{i,k} \vee \neg \sigma_{j,i} \vee \phi_{ij,0,2} \right)~~~\
\end{eqnarray}

\end{document}